\newcommand{\ov}[1]{\overline{#1}}
\newcommand{\bx}{\mathbf{x}}
\newcommand{\bby}{\mathbf{y}}
\newcommand{\W}{\mathrm{W}}
\newcommand{\R}{\mathbb{R}}
\newcommand{\eR}{\ov{\R}_+}
\newcommand{\myd}{\mathrm{d}}
\newcommand{\myi}{\mathrm{i}}
\newcommand{\myis}{\myi_*}
\newcommand{\myr}{\mathrm{r}}
\newcommand{\tuple}[1]{\langle #1 \rangle}
\newcommand{\supp}{\mathrm{supp}}
\newcommand{\tref}[2]{\hyperref[#2]{\ref*{#1}.\ref*{#2}}}
\title{Free Complete Wasserstein Algebras}
\author{Radu Mardare}
\address{Department of Computer Science, Aalborg University}
\email{mardare@cs.aau.dk}
\author{Prakash Panangaden}
\address{School of Computer Science, McGill University}
\email{prakash@cs.mcgill.ca}
\author{Gordon D.\ Plotkin}
\address{Laboratory for Foundations of Computer Science, School of Informatics, University of Edinburgh}
\email{gdp@inf.ed.ac.uk}
\begin{document}


\begin{abstract}
We present an algebraic account of the Wasserstein distances  $W_p$ on complete metric spaces, for $p \geq 1$. This is part of a program of  a quantitative algebraic theory of effects in programming languages. 
In particular, we give axioms, parametric in $p$,  for algebras  over metric spaces equipped with probabilistic choice operations. The axioms say that the operations form a barycentric algebra and that the metric satisfies a property typical of the Wasserstein distance $W_p$. We  show that the free complete such algebra  over a complete metric space is that of the Radon probability  measures with finite moments of order $p$, equipped with the Wasserstein distance as metric   and with  the usual binary convex sums as operations.

\end{abstract}

\maketitle 
\section{Introduction}

The denotational semantics of probabilistic programs generally makes use of one or another monad of probability measures.  In~\cite{Law,Gir}, Lawvere and Giry  provided the first example of such a monad,  over the category of measurable spaces; Giry also gave a second example,  over Polish spaces. Later, in~\cite{JP,Jon}, Jones and Plotkin provided a monad of evaluations over directed complete posets. More recently, in~\cite{HKSY}, Heunen, Kammar, et al provided such a monad over quasi-Borel spaces.  

Metric spaces (and nonexpansive maps) form another natural category to consider. The question then arises as to what metric to take over probability measures, and the Kantorovich-Wasserstein, or, more generally, the Wasserstein  distances of order $p>1$ (generalising from $p = 1$), provide natural choices.  A first result of this kind was given in~\cite{VanB}, where it was shown that the Kantorovich distance  provides a monad over  the subcategory of 1-bounded complete separable metric spaces.


One purpose of such monads is to make probabilistic choice operations available, and so an algebraic account of the monads in terms of these operations is of interest; we focus here on such  results in the metric context. Barycentric algebras axiomatise probabilistic choice.
They have binary convex combination operations $x +_r y$, for $r \in [0,1]$; one can think of $x +_r y$ as choosing to continue with $x$ with probability $r$ and with $y$ with probability $1-r$.
The operations are required to obey appropriate laws.

In this paper we show that the Wasserstein distance of order $p$ yields a monad $P_p$ on the category of complete metric spaces and nonexpansive maps and that this monad can be characterised as the free algebra monad for barycentric algebras over complete metric spaces where  
the barycentric operations are  required to be nonexpansive and are subject to an appropriate Wasserstein condition. The probability measures in $P_p(X)$ are required to be Radon (equivalently, tight) and to have finite $p$-moment. The important Kantorovich case of this result, where $p = 1$, was essentially already established in~\cite{FP}, as it is an immediate  consequence of Theorem 5.2.1 there.

In~\cite{MPP} a somewhat different direction was taken where monads on (extended) metric spaces are defined algebraically, using  a quantitative  analogue of equational logic where equations give an upper bound on the distance between two elements, and algebras are extended metric spaces equipped with nonexpansive operations.  One then seeks to characterise the action of the monad on as wide a variety of spaces as possible. 

This is part of a program begun in~\cite{MPP} to establish a quantitative algebraic theory of effects in programming languages. 
%
%
%
In particular, in~\cite{MPP} we gave axioms, parametric in $p$,  for algebras  equipped with probabilistic choice operations and showed the free complete such algebras over a complete 1-bounded separable metric space X are the probability  measures on $X$ with the Wasserstein distance, equipped with the usual binary convex sum operations. Comparing this result to the results in this paper and in~\cite{FP}, it is worth remarking that probability measures on 1-bounded spaces automatically have finite moments, and probability measures on separable complete spaces are automatically Radon.



In Section~\ref{Wasdis}, we discuss the Wasserstein distance, $W_p$ of order $p$ between probability measures on a metric space $X$.  In Theorem~\ref{CDgen} we show that the distance is a metric on the Radon probability measures with finite moments of order $p$. To do so, we make use of a result in~\cite{CD} that the triangle inequality holds for all probability measures if $X$ is separable. Here, and throughout the paper, we reduce the general case to the separable case using a lemma, due to Basso, which states that the support of any probability measure on a metric space is separable.

Next, in Theorem~\ref{Bolgen}
we show that if $X$ is complete then the  Wasserstein metric of order $p$ on the Radon probability measures with finite moments of order $p$ is also complete and is generated by the probability measures with finite support. To do this, we make use of the well-known result that if $X$ is complete and separable  then the  Wasserstein metric of order $p$ on all probability measures over $X$ with finite  moments of order $p$ is also complete and separable, being generated by the rational measures with finite support in a countable basis of  $X$ (see, e.g.,~\cite[Theorem 6.18]{Vil08} and the bibliographic discussion there). The section concludes with a side-excursion, a discussion of weak convergence: Theorem~\ref{mettop} generalises the characterisation of the topology induced by the Wasserstein metrics in terms of weak convergence given by~\cite[Theorem 6.9]{Vil08} from complete separable spaces to all complete spaces.

 In Section~\ref{Wasalg} we discuss  the algebraic aspect of these spaces. In particular, in Theorem~\ref{wholepoint} we show that the Radon probability measures with finite  moments of order $p$ on a complete metric space $X$,  equipped with the $W_p$ metric and  binary convex sums, form the free complete Wasserstein  algebra of order $p$ over $X$. 
  To do this, we first characterise the free Wasserstein  algebra of order $p$ over a metric space; the characterisation follows straightforwardly from the well-known characterisation of the free barycentric algebra over a set as the natural such algebra on the finite probability measures on the set. Having done so, we pass to the case of complete metric spaces using a general theorem on completions of metric algebras, combined with  Theorem~\ref{Bolgen}.

Finally, we discuss some loose ends. First, there is a certain disconnect between our work and that of~\cite{MPP} in that here we use standard metric spaces, whereas  there the more general framework of extended metric spaces is employed where distances can be infinite.  We sketch how to bridge this disconnect at the end of Section~\ref{Wasalg}. Second,  there are other natural algebraic approaches to  probability. Convex  spaces are an algebraic formulation of finite convex combinations $\sum_{i = 1}^ n r_ix_i$. Midpoint algebras have a midpoint operation $x \oplus y$: one can think of $x \oplus y$ as providing a fair choice between $x$ and $y$, and so being equivalent to $x +_{0.5} y$. We sketch how our algebraic characterisation of the Wasserstein monads on complete metric spaces can be rephrased in terms of either of these alternative approaches.

We affectionately dedicate this paper to Furio Honsell on the occasion of his 60th birthday.

\section{The Wasserstein distance} \label{Wasdis}


We begin with some technical preliminaries on Radon probability measures and couplings and their support. For general background on probability measures on topological and metric spaces see~\cite{Fre,Par}. By probability measure we mean a Borel probability measure. Given such a probability measure $\mu$ on a Hausdorff space $X$, we say that a Borel set $B$ is \emph{compact inner regular (for $\mu$)} if:
\[\mu(B) = \mathrm{sup} \{\mu(C) \mid C \mbox{ compact}, C \subseteq B \}\]
 Then $\mu$ is \emph{Radon} if all Borel sets are compact inner regular for it, and \emph{tight} if $X$ is compact inner regular for it (equivalently,  if for any Borel set $B$ and $\varepsilon > 0$ there is a compact set $C$ such that $\mu(B\backslash C) < \varepsilon$). 
  Every Radon measure is tight, every tight probability  measure on a metric space is Radon, and every probability  measure on a separable complete metric space is tight.


The \emph{support} of a probability measure $\mu$ on a topological space $X$ is
\[\supp(\mu) =_{\mathbf{def}} \{x \in X \mid \mu(U) > 0 \mbox{ for all open $U$ containing $x$}  \}\]
Note that the support is always a closed set. If $\mu$ is Radon then $\supp(\mu)$ has measure 1.
 If the support of $\mu$ is finite then $\mu$ is Radon and can be written uniquely, up to order, as a finite convex sum of Dirac measures, viz:
\[\mu = \sum_{s \in \supp(\mu)} \mu(s)\delta(s)\]
(writing $\mu(s)$  instead of $\mu(\{s\})$). We say that $\mu$ is \emph{rational} if all the $\mu(s)$  are.

The following very useful lemma is due to Basso~\cite{Bas}. It enables us, as he did, to establish  results about probability measures on  metric spaces by applying  results about probability measures on separable metric spaces to their supports. 
\begin{lem} \label{basic} 
Every  probability measure on a metric space has separable support.
 \end{lem}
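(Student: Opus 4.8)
The plan is to exhibit a countable dense subset of $\supp(\mu)$ directly. The principle I would exploit is that a finite measure cannot carry uncountably many pairwise disjoint sets of positive measure, combined with the defining property of the support that every point of $\supp(\mu)$ assigns positive measure to each of its open neighbourhoods. Since separability of a metric space is equivalent to the existence, for every $\delta > 0$, of a countable $\delta$-net, it suffices to produce, for each $n$, a countable subset $D_n \subseteq \supp(\mu)$ whose radius-$1/n$ balls cover $\supp(\mu)$; the union $\bigcup_n D_n$ is then the desired countable dense set.

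Concretely, for each $n$ I would use Zorn's lemma to choose a maximal subset $D_n \subseteq \supp(\mu)$ that is $1/n$-separated, i.e.\ any two distinct points of $D_n$ lie at distance at least $1/n$. Maximality guarantees that the open balls of radius $1/n$ centred at the points of $D_n$ cover $\supp(\mu)$, so $\bigcup_n D_n$ will be dense once each $D_n$ is shown to be countable. To see countability, observe that the open balls $B(x,1/(2n))$ for $x \in D_n$ are pairwise disjoint, since their centres are at mutual distance at least $1/n$, and that each such ball has strictly positive measure because its centre lies in $\supp(\mu)$.

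The heart of the argument is then the elementary but crucial fact that a finite measure admits at most countably many pairwise disjoint measurable sets of positive measure: grouping the balls according to the thresholds $\mu(B(x,1/(2n))) > 1/k$, each group is finite (being disjoint subsets of $X$, a group of $m$ members has total mass exceeding $m/k$, which cannot exceed $\mu(X)=1$, so $m < k$), and there are only countably many thresholds $k$. Hence each $D_n$ is countable, so $\bigcup_n D_n$ is countable and, by the covering property, dense in $\supp(\mu)$. I do not anticipate a serious obstacle here, as the lemma is genuinely soft; the only points demanding care are the mild appeal to Zorn's lemma to obtain the maximal separated nets and the bookkeeping that keeps those nets inside the support, so that the disjoint balls are guaranteed positive measure. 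The rest is routine.
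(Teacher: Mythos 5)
Your proposal is correct and follows essentially the same route as the paper's proof: a maximal $\varepsilon$-separated (there, $2^{-n}$-separated) subset of $\supp(\mu)$ for each scale, disjoint half-radius balls of positive measure forcing countability, and maximality giving density of the union of the nets. The only cosmetic difference is that you spell out the counting argument (grouping by mass thresholds $1/k$) that the paper compresses into one sentence.
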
 
\begin{proof}
Let $\mu$ be a probability measure on a metric space. For every $\varepsilon > 0$ let $C_{\varepsilon}$ be a maximal set of points in $\supp(\mu)$ with the property that all points are at distance $\geq \varepsilon$ apart. As any two open balls with centre in $C_{\varepsilon}$  and radius $\frac{1}{2}\varepsilon$ are disjoint and each such ball has $\mu$-measure $>0$,  $C_{\varepsilon}$  is countable (were it uncountable, we would have an uncountable collection of reals with any denumerable subset summable, and no such collection exists). By the maximality of  $C_{\varepsilon}$, any point in $\supp(\mu)$ is at distance $< \varepsilon$ from some point in $C_{\varepsilon}$. It follows that the countable set $\bigcup_{n \geq 0} C_{2^{- n}}$ is dense in $\supp(\mu)$.
\end{proof}

A \emph{coupling} $\gamma$  between  two probability measures $\mu$ and $\nu$ on a topological space $X$ is a  probability measure  on  $X^2$ whose left and right marginals (= pushforwards along the projections) are, respectively, $\mu$ and $\nu$. We gather some facts about such  couplings:
\begin{lem} \label{coup} Let $\gamma$  be a coupling between  two  probability measures $\mu$ and $\nu$ on a topological space $X$. Then:
\[\supp(\gamma) \subseteq \supp(\mu) \times \supp(\nu)\]
Further, if $\mu$ and $\nu$ are tight, so is $\gamma$. Moreover, in the case that $X$ is a metric space, if $\mu$ and $\nu$ are Radon,  so is  $\gamma$. 
\end{lem}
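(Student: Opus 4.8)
The plan is to prove the three claims in order, each reducing to a statement about marginals. Throughout, write $\pi_1, \pi_2 : X^2 \to X$ for the two projections, so that $\mu = (\pi_1)_* \gamma$ and $\nu = (\pi_2)_* \gamma$ by the definition of a coupling.

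\medskip

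\noindent\emph{Support inclusion.} First I would establish $\supp(\gamma) \subseteq \supp(\mu) \times \supp(\nu)$. It suffices to show that if $x \notin \supp(\mu)$, then no point $(x, y)$ lies in $\supp(\gamma)$ (and symmetrically for the second coordinate). If $x \notin \supp(\mu)$, there is an open $U \ni x$ with $\mu(U) = 0$. Then for any $y \in X$, the open set $U \times X$ contains $(x,y)$ and satisfies $\gamma(U \times X) = \mu(U) = 0$, so $(x,y) \notin \supp(\gamma)$. The symmetric argument using an open $V \ni y$ with $\nu(V) = 0$ disposes of points whose second coordinate is outside $\supp(\nu)$. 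Together these give the inclusion.

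\medskip

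\noindent\emph{Tightness.} Next, suppose $\mu$ and $\nu$ are tight. Given $\varepsilon > 0$, choose compact sets $K, L \subseteq X$ with $\mu(X \setminus K) < \varepsilon/2$ and $\nu(X \setminus L) < \varepsilon/2$. Then $K \times L$ is compact in $X^2$, and I would bound $\gamma\bigl(X^2 \setminus (K \times L)\bigr)$ by observing that $X^2 \setminus (K \times L) \subseteq \bigl((X \setminus K) \times X\bigr) \cup \bigl(X \times (X \setminus L)\bigr)$, whence $\gamma\bigl(X^2 \setminus (K \times L)\bigr) \leq \mu(X \setminus K) + \nu(X \setminus L) < \varepsilon$ by subadditivity and the marginal conditions. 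This verifies the compact inner regularity of $X^2$ for $\gamma$, i.e.\ that $\gamma$ is tight.

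\medskip

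\noindent\emph{Radon.} Finally, for the metric case, I would invoke the general facts collected just before the lemma: on a metric space every tight probability measure is Radon. The space $X^2$ is itself a metric space (with, say, the product metric), so once $\gamma$ is known to be tight, it is automatically Radon. If $\mu$ and $\nu$ are Radon, they are tight, so by the previous part $\gamma$ is tight, and therefore Radon. I do not expect a serious obstacle here; the only point requiring care is the tightness estimate, where one must choose the compact covering of the complement correctly and confirm that the subadditivity bound uses precisely the two marginal identities. The remaining steps are routine once the marginal relation $\gamma(A \times X) = \mu(A)$, $\gamma(X \times B) = \nu(B)$ is used systematically.
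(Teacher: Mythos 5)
Your proposal is correct and follows essentially the same route as the paper's own proof: the support inclusion via marginals of open rectangles (you argue the contrapositive, the paper argues directly from a point of $\supp(\gamma)$), tightness via a compact rectangle $K \times L$ together with subadditivity over $((X \setminus K) \times X) \cup (X \times (X \setminus L))$, and the Radon claim by combining part two with the fact that tight probability measures on metric spaces are Radon. There are no gaps.
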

\begin{proof}
For the first part, suppose that $(x,y)\in \supp(\gamma)$ and let $U$ be an open neighbourhood of $x$. Then $U \times Y$ is an open neighbourhood of $(x,y)$ and so $\mu(U) = \gamma(U \times Y) > 0$. So $x \in \supp(\mu)$. Similarly $y \in \supp(\nu)$.

For the second part, choose $\varepsilon > 0$. As $\mu$ and $\nu$ are tight, $X$ and $Y$ are compact inner regular for them. So there are compact sets $C \subseteq X$ and $D \subseteq Y$ such that $\mu(X\backslash C) < \frac{1}{2}\varepsilon$ and $\nu(Y\backslash D) < \frac{1}{2}\varepsilon$.  Then we have:
\[\begin{array}{lcl}\gamma((X \times Y) \backslash (C \times D) ) 
          & = &  \gamma((X \times Y) \backslash ((C \times Y) \cap (X \times D)))\\
         & = &  \gamma(((X \times Y) \backslash (C \times Y) )\cup ((X \times Y) \backslash (X \times D)))\\
         & = &  \gamma(((X \backslash C) \times Y)\cup (X \times (Y \backslash D) ))\\
         & \leq &  \gamma((X \backslash C) \times Y) + \gamma(X \times (Y \backslash D) )\\
         & = &  \mu(X \backslash C)  + \nu(Y \backslash D)\\
         & < & \varepsilon
\end{array}\]
So, as $\varepsilon$ was arbitrary, $X \times Y$ is compact inner regular for $\gamma$, as required. The last part is immediate as all tight probability measures on metric spaces are Radon.
\end{proof}
%

%
%
We now turn to the Wasserstein distance. 
 A  probability measure $\mu$  on a metric space $X$  is said to have \emph{finite   moment of order $p$}, where $p \geq1$, if, for some (equivalently 
 %
 %
 all) $x_0 \in X$, the integral
\[\int d(x_0, -)^p\myd\mu\]
is finite
To see that the existence of the $p$-th moment does not depend on the choice of $x_0$, recollect the inequality $(a + b)^p \leq 2^{p-1}(a^p + b^p)$ for $a,b \in \R_+$ and $p \geq 1$. Then,  for any $x,y,z \in X$, we have:
  \begin{align} d(x,y)^p &\leq 2^{p-1}(d(x,z)^p + d(z,y)^p)
  \end{align}
and the conclusion follows, taking $z = x_0$ and $y$ to be any other choice. Note that  finite probability measures with finite support have finite moments of all orders.

\begin{exa} \label{example1}
One can obtain examples of measures with countable support, and with or without finite moments, by using the fact  that the Dirichlet series $\zeta(s) = \sum_{n \geq 1}\frac{1}{n^s}$ converges for reals $s > 1$ and diverges for $s = 1$. Taking the  natural numbers as a metric space with the usual Euclidean metric $d(m,n) = |m - n|$, one sees that, for $q \geq 1$, the discrete probability measure $D_q$ where
$D_q(n) = \zeta(q+1)^{-1}\frac{1}{n^{q +1}}$ for $n \geq 1$ and $D_q(0) = 0$,  has finite $p$-moment if, and only if,  $p < q$. 
\end{exa}

The Wasserstein distance of order $p$,  $\W_p$ is  defined between  probability measures with finite  moments of order $p$ by:
\[\W_p(\mu,\nu) = \left  ( \inf_{\gamma} \int d^p_X \myd\gamma \right )^{1/p} \]
where $\gamma$ runs over the 
couplings between $\mu$ and $\nu$. 
To see that it is well-defined for probability measures with finite $p$-moment, one again invokes (1), this time to get an integrable upper bound on $d^p_X$ as follows:
\[\int d^p_X \myd\gamma \leq 2^{p-1}(\int d(- ,x_0)^p \myd \gamma  + \int d(x_0,-)^p\myd \gamma) = 2^{p-1}(\int d(- ,x_0)^p \myd \mu  + \int d(x_0,-)^p\myd \nu)\]
The Wasserstein distance in monotonic in $p$, that is, $\W_p(\mu,\nu) \leq \W_q(\mu,\nu)$ for $1 \leq p \leq q$ and $\mu,\nu$ with finite moments of order $q$. This is an immediate consequence of the  inequality $(\int f^p \myd \mu)^{1/p} \leq (\int f^q   \myd\mu)^{1/q}$, which holds for any probability measure $\mu$ and any $f \geq 0$ with $f^q$ integrable w.r.t.\  $\mu$ (this inequality is itself a straightforward consequence of H\"{o}lder's inequality). 

We need two lemmas relating probability measures on a  metric space with probability measures on a closed subset of the space.
Let $C$ be a closed subset of a metric space $X$. 
We write $\myis(\mu)$ for the pushforward of a finite measure $\mu$ on $C$ along the inclusion $\myi: C \rightarrow X$ of $C$ in $X$, so $\myis(\mu)(B) = \mu(B \cap C)
$; and we write  $\myr(\nu)$ for the restriction of a finite measure $\nu$ on $X$ to a finite measure on $C$, so $\myr(\nu)(B) = \nu(B)$. Note that 
$\myr(\myis(\mu)) = \mu$.


\begin{lem} \label{rel}
Let $C$ be a closed subset of a metric space $X$. 
\begin{enumerate} 
\item \label{rel1} If $\mu$ is a Radon probability measure on $C$, then $\myis(\mu)$ is a    Radon probability measure on $X$ with the same support as $\mu$.
Further,  $\myis(\mu)$  has finite  moment of order $p$ if $\mu$ does.

\item  \label{rel2} If $\nu$ is a Radon probability measure on $X$ with support included in $C$ then  $\myr(\nu)$ is a Radon probability measure on $C$, and we have:
\[\nu = \myis(\myr(\nu))\]
If, further,  $\nu$   has finite  moment of order $p$, so does $\myr(\nu)$.
\end{enumerate}
\end{lem}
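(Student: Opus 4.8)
The plan is to prove both parts by routine transport of structure along the isometric inclusion $\myi\colon C \to X$, leaning on two facts from the preliminaries: that compactness is an absolute property (a subset of $C$ is compact as a subspace of $C$ exactly when it is compact as a subspace of $X$, since $C$ carries the subspace topology), and that a Radon measure assigns measure $1$ to its support. The first fact is what lets the Radon property transfer in both directions, and the second is what drives the whole of the second part.

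For the first part, I would begin by noting that $\myis(\mu)$ is a probability measure, since $\myis(\mu)(X) = \mu(X \cap C) = \mu(C) = 1$. To see it is Radon, take any Borel $B \subseteq X$; then $\myis(\mu)(B) = \mu(B \cap C)$, and compact inner regularity of $B \cap C$ for $\mu$ supplies compact sets $K \subseteq B \cap C$ with $\mu(K)$ approaching $\mu(B \cap C)$. Each such $K$, being compact in $C$, is compact in $X$, lies in $B$, and satisfies $\myis(\mu)(K) = \mu(K)$, so $B$ is compact inner regular for $\myis(\mu)$. For the support, I would observe that $X \setminus C$ is open with $\myis(\mu)(X \setminus C) = 0$, so $\supp(\myis(\mu)) \subseteq C$; and for $x \in C$ the open neighbourhoods of $x$ in $C$ are precisely the traces of open neighbourhoods of $x$ in $X$, whence $x \in \supp(\myis(\mu))$ iff $x \in \supp(\mu)$. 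Finally, since $\myi$ is isometric, the change-of-variables formula for the pushforward gives $\int d(x_0,-)^p \myd \myis(\mu) = \int d(x_0,-)^p \myd \mu$ for $x_0 \in C$, so finiteness of the $p$-moment transfers, the independence from the base point $x_0$ having already been established via (1).

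For the second part, the key first step is to show $\nu(C) = 1$: since $\nu$ is Radon its support has measure $1$, and by hypothesis $\supp(\nu) \subseteq C$, so $\nu(C) = 1$ and hence $\nu(X \setminus C) = 0$. This at once makes $\myr(\nu)$ a probability measure on $C$, as $\myr(\nu)(C) = \nu(C) = 1$, and it yields the identity $\nu = \myis(\myr(\nu))$: for Borel $B \subseteq X$ one computes $\myis(\myr(\nu))(B) = \nu(B \cap C) = \nu(B) - \nu(B \setminus C) = \nu(B)$, because $B \setminus C \subseteq X \setminus C$ is null. That $\myr(\nu)$ is Radon follows by the same compactness-is-absolute argument run downwards: a Borel $B \subseteq C$ is compact inner regular for $\nu$, and the approximating compact sets $K \subseteq B \subseteq C$ are compact in $C$ with $\myr(\nu)(K) = \nu(K)$. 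The $p$-moment bound then transfers through the identity $\nu = \myis(\myr(\nu))$ together with the pushforward formula established in the first part.

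The only point that needs genuine care — and the closest thing here to an obstacle — is the bookkeeping around the two $\sigma$-algebras and the absoluteness of compactness. Since $C$ is closed, hence Borel, the Borel sets of the subspace $C$ are exactly the traces $B \cap C$ of Borel sets $B$ of $X$, so $\myr$ and $\myis$ are genuinely well defined on the relevant $\sigma$-algebras; and it is precisely the absoluteness of compactness for subsets of $C$ that allows the compact witnesses for inner regularity to pass freely between $C$ and $X$ in both directions. Once these two observations are secured, every remaining step is a direct computation.
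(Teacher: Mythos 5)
Your proof is correct and takes essentially the same approach as the paper's: transport the measure along the closed inclusion in each direction, use the fact that a Radon measure gives its support full mass to obtain $\nu = \myis(\myr(\nu))$, and transfer finite $p$-moments by the change-of-variables formula for pushforwards. The only difference is one of detail: the paper dismisses the Radon-ness verifications as ``straightforward to check,'' whereas you have usefully spelled them out via the absoluteness of compactness and the trace description of the Borel sets of $C$.
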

\begin{proof}
\begin{enumerate}
\item Let  $\mu$ be a Radon probability measure on $C$, and note that $C$ must then be non-empty. It is straightforward to check that 
$\myis(\mu)$ is a    Radon probability measure on $X$ with the same support as $\mu$. Next,
choosing $x_0 \in C$, as 
\[\int d_X(x_0,-)^p \myd (\myis\mu) = \int d_C(x_0,-)^p \myd \mu\]  
we see that $\myis(\mu)$  has finite  moment of order $p$ if $\mu$ does.

\item Let $\nu$ be a Radon probability measure on $X$ with support included in $C$. 
It is straightforward to check that  $\myr(\nu)$ is a Radon probability measure on $C$. Regarding the equality, for any Borel set $B$ of $X$ we have:
\[\myis(\myr(\nu))(B) = \myr(\nu)(B \cap C) = \nu(B \cap C) =  \nu(B) \]
with the last equality holding as $\nu$ is Radon and the support of $\nu$ is included in $C$. Finally, $\myr(\nu)$ has finite  moment of order $p$ if $\nu$ does,  as, choosing $x_0 \in C$, we have: 
\begin{align*}
\int d_C(x_0,-)^p \myd (\myr(\nu)) =  \int d_X(x_0,-)^p \myd (\myis(\myr (\nu))) = \int d_X(x_0,-)^p \myd \nu \tag*{\qedhere}
\end{align*}
\end{enumerate}

\end{proof}

\begin{lem} \label{red} Let $C$ be a closed subset of a metric space $X$. Then
\begin{enumerate}
\item \label{red1}  For any Radon probability measures $\mu, \nu$ on $X$ with finite   moments of order $p$  we have:
\[\W_p(\mu,\nu) = \W_p(\myis(\mu),\myis(\nu))\]
\item  \label{red2}  For any   Radon probability measures $\mu, \nu$ on $X$ with finite   moments of order $p$   whose support is included in $C$,  we have:
\[\W_p(\mu,\nu) = \W_p(\myr(\mu),\myr (\nu))\]

\end{enumerate}
\end{lem}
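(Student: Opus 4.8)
The two statements are inverse to one another under the correspondence of Lemma~\ref{rel}, so the plan is to prove part~\ref{red1} in full and then obtain part~\ref{red2} as an immediate corollary. (I read part~\ref{red1} with $\mu,\nu$ ranging over Radon measures on the closed set $C$, which is what makes $\myis(\mu),\myis(\nu)$ meaningful.) The heart of the matter is that the transport problem defining $\W_p(\myis(\mu),\myis(\nu))$ on $X$ is, coupling by coupling, the same as the one defining $\W_p(\mu,\nu)$ on $C$: this is because the inclusion $\myi\colon C\to X$ is an isometric embedding, and because a coupling cannot escape the product of the supports of its marginals.

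Concretely, I would exhibit a cost-preserving bijection between the couplings of $\mu,\nu$ on $C^2$ and those of $\myis(\mu),\myis(\nu)$ on $X^2$, obtained by pushing forward along $\myi\times\myi\colon C^2\to X^2$; here I use that $C^2$ is closed in $X^2$, so Lemma~\ref{rel} applies verbatim to the inclusion $C^2\hookrightarrow X^2$. In the forward direction, if $\gamma$ is a coupling of $\mu,\nu$, then $(\myi\times\myi)_*\gamma$ is Radon by part~\ref{rel1} of Lemma~\ref{rel}, and its marginals are $\myis(\mu)$ and $\myis(\nu)$: this follows from the commutation $\pi_1\circ(\myi\times\myi)=\myi\circ\pi_1^C$ of first projections (and symmetrically for the second), which carries the left marginal $\mu$ of $\gamma$ to $\myis(\mu)$.

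For the reverse direction the essential input is Lemma~\ref{coup}. Given any coupling $\gamma$ of $\myis(\mu),\myis(\nu)$ on $X^2$, Lemma~\ref{coup} shows both that $\gamma$ is Radon and that $\supp(\gamma)\subseteq\supp(\myis(\mu))\times\supp(\myis(\nu))=\supp(\mu)\times\supp(\nu)\subseteq C^2$, so $\gamma$ is concentrated on the closed set $C^2$. Part~\ref{rel2} of Lemma~\ref{rel} (for $C^2\hookrightarrow X^2$) then gives $\gamma=(\myi\times\myi)_*\myr(\gamma)$, where $\myr(\gamma)$ is the restriction of $\gamma$ to $C^2$; and the same projection computation as before, together with $\myr\circ(\myi\times\myi)_*=\mathrm{id}$, shows that $\myr(\gamma)$ is a genuine coupling of $\mu,\nu$. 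Hence $(\myi\times\myi)_*$ and $\myr$ are mutually inverse on the two sets of couplings. Finally, since $\myi$ is isometric we have $d_X\circ(\myi\times\myi)=d_C$ on $C^2$, so the change-of-variables formula for pushforwards gives $\int_{X^2}d_X^p\,\myd\bigl((\myi\times\myi)_*\gamma\bigr)=\int_{C^2}d_C^p\,\myd\gamma$. The two cost functionals thus agree under the bijection, and therefore so do their infima, which is exactly part~\ref{red1}.

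Part~\ref{red2} is then immediate: if $\mu,\nu$ are Radon probability measures on $X$ with finite $p$-moment and support contained in $C$, then by part~\ref{rel2} of Lemma~\ref{rel} they equal $\myis(\myr(\mu))$ and $\myis(\myr(\nu))$ respectively, with $\myr(\mu),\myr(\nu)$ Radon of finite $p$-moment on $C$; applying part~\ref{red1} to this pair yields $\W_p(\myr(\mu),\myr(\nu))=\W_p(\mu,\nu)$. The one step that demands care is the reverse direction of the bijection: verifying that a coupling on $X^2$ which Lemma~\ref{coup} forces onto $C^2$ really does restrict to a coupling on $C^2$ with the correct marginals, rather than merely to a measure of the right total mass. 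Once that identification is in place, the isometry and change-of-variables steps are routine.
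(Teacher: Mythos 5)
Your proof is correct and follows essentially the same route as the paper's: one inequality by pushing couplings forward along $\myi\times\myi$, the other by using Lemma~\ref{coup} (Radonness and support containment of couplings) together with Lemma~\ref{rel} applied to the closed set $C^2\subseteq X^2$ to restrict an arbitrary coupling of $\myis(\mu),\myis(\nu)$ back to $C^2$, with cost preserved by the isometry, and part~(2) then deduced from part~(1) via $\mu=\myis(\myr(\mu))$ exactly as in the paper. Your packaging of the two inequalities as a cost-preserving bijection, and your verification of the marginals of $\myr(\gamma)$ via functoriality and injectivity of $\myis$ rather than the paper's direct set-theoretic computation, are only cosmetic variants; you also correctly resolved the statement's typo by reading part~(1) as concerning measures on $C$.
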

\begin{proof}
\begin{enumerate}
\item Let $\gamma$ be a 
coupling between $\mu$ and $\nu$. 
Then $(\myi \times \myi)_*(\gamma)$ is a coupling between $\myis(\mu)$ and $\myis(\nu)$, as
$(\pi_0)_*((\myi \times \myi)_*(\gamma)) = \myis((\pi_0)_*(\gamma)) = \myis(\mu)$, and similarly for $\nu$.
%
%
We also have:
\[\int d^p_X \myd ((\myi \times \myi)_*\gamma) = \int d^p_C \myd \gamma\]
As $\gamma$ was chosen arbitrarily, we therefore have:
\[\W_p(\mu,\nu) \geq \W_p(\myis(\mu),\myis(\nu))\]

For the reverse inequality, let $\gamma$ be a 
coupling between $\myis(\mu)$ and $\myis(\nu)$. 
By Lemma~\tref{rel}{rel1}, $\myis(\mu)$ and $\myis(\nu)$ are Radon. So, by Lemma~\ref{coup}, $\gamma$ is also Radon. Also,  $\supp(\gamma) \subseteq C \times C$, since $\supp(\myis(\mu)) \subseteq C$ and $\supp(\myis(\nu)) \subseteq C$. 
So, by Lemma~\tref{rel}{rel2}, $\myr(\gamma)$ is a Radon probability measure on $C^2$ and $\gamma = (\myi \times \myi)_*(\myr(\gamma))$. 

Further, $\myr(\gamma)$ is a coupling between $\mu$ and $\nu$, for:
\[\begin{array}{lcll}\myr(\gamma)(B \times C) & = &   \gamma(B \times C)\\ 
                                                                      & = &  \gamma( (B \times X) \cap (C \times C) )\\
                                                                        & = &  \gamma(B \times X) \quad (\mbox{as $\supp(\gamma) \subseteq C \times C$})\\
                                                                          & = & \mu(B)\\
                                                                           & = &  \myr(\mu)(B) 
 \end{array}
\]
 and similarly for $\nu$.
We then have:

 \[\int d^p_X \myd \gamma = \int d^p_X \myd ((\myi \times \myi)_*(\myr\gamma)) =  \int d^p_C \myd (\myr\gamma)\]
 As $\gamma$ was chosen arbitrarily, we therefore have, as required:
\[\W_p(\myis(\mu),\myis(\nu)) \geq \W_p(\mu,\nu)\]
%
\item
Using part (1), we have:
\begin{align*}
\W_p(\mu,\nu) = \W_p(\myis(\myr(\mu)),\myis(\myr (\nu))) = \W_p(\myr(\mu),\myr (\nu)) \tag*{\qedhere}
\end{align*}
\end{enumerate}

\end{proof}

With these technical lemmas established, we can now prove two  theorems on spaces of Radon probability measures. For any metric space $X$ and $p \geq 1$, define $P_p(X)$ to be the set  of Radon probability measures on $X$ with finite   moments of order $p$,  equipped  with the $W_p$ distance. For the first theorem we use the result in~\cite{CD}  that the triangle inequality holds for all probability measures if $X$ is separable.
\begin{thm} \label{CDgen} Let $X$ be  a metric space. Then $P_p(X)$ is a  metric space. 
\end{thm}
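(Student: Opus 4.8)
The plan is to verify the metric axioms for $\W_p$ on $P_p(X)$, well-definedness and finiteness on measures with finite $p$-moment having already been established before the theorem. Non-negativity is immediate, $\W_p$ being an infimum of non-negative quantities. Symmetry holds because the coordinate swap $\sigma(x,y) = (y,x)$ is a Borel automorphism of $X^2$ that carries couplings of $(\mu,\nu)$ bijectively onto couplings of $(\nu,\mu)$ and, $d_X$ being symmetric, preserves $\int d^p_X \myd\gamma$. For $\W_p(\mu,\mu) = 0$ I would use the diagonal coupling $\Delta_*\mu$, where $\Delta(x) = (x,x)$: both its marginals equal $\mu$, and $\int d^p_X \myd(\Delta_*\mu) = \int d_X(x,x)^p \myd\mu = 0$.

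For the separation property $\W_p(\mu,\nu) = 0 \Rightarrow \mu = \nu$ I would argue directly rather than reduce to the separable case. Choosing couplings $\gamma_n$ with $\int d^p_X \myd\gamma_n \to 0$, Jensen's inequality gives $\int d_X \myd\gamma_n \leq (\int d^p_X \myd\gamma_n)^{1/p} \to 0$; hence for every bounded $L$-Lipschitz $f \colon X \to \R$ we have $|\int f \myd\mu - \int f \myd\nu| = |\int (f(x) - f(y)) \myd\gamma_n| \leq L \int d_X \myd\gamma_n \to 0$, so $\int f \myd\mu = \int f \myd\nu$. Applying this to the bounded Lipschitz functions $\max(0, 1 - k\, d_X(-,F))$, which decrease pointwise to the indicator of a closed set $F$, and invoking dominated convergence, yields $\mu(F) = \nu(F)$ for all closed $F$; as the closed sets form a $\pi$-system generating the Borel $\sigma$-algebra, $\mu = \nu$.

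The main obstacle is the triangle inequality, since the usual gluing of two couplings along their shared marginal relies on disintegration and is only guaranteed in the separable setting --- precisely the statement of~\cite{CD} that I may assume. The plan is to reduce to that setting via Basso's lemma. Given $\mu_1, \mu_2, \mu_3 \in P_p(X)$, Lemma~\ref{basic} makes each $\supp(\mu_i)$ separable, so the union $\supp(\mu_1) \cup \supp(\mu_2) \cup \supp(\mu_3)$ is separable and therefore so is its closure $C$, a closed separable subspace of $X$ containing all three supports. By Lemma~\tref{rel}{rel2}, each $\myr(\mu_i)$ is a Radon probability measure on $C$ with finite $p$-moment satisfying $\mu_i = \myis(\myr(\mu_i))$; and by Lemma~\tref{red}{red2}, $\W_p(\mu_i,\mu_j) = \W_p(\myr(\mu_i),\myr(\mu_j))$, with the right-hand side computed in $C$. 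Since $C$ is separable, the triangle inequality for $\myr(\mu_1), \myr(\mu_2), \myr(\mu_3)$ holds by~\cite{CD}, and transporting it back along these equalities gives $\W_p(\mu_1,\mu_3) \leq \W_p(\mu_1,\mu_2) + \W_p(\mu_2,\mu_3)$, as required.
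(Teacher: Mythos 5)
Your proposal is correct. For the crux of the theorem --- the triangle inequality --- you take essentially the paper's route: pass to the closed separable set containing the three supports (Lemma~\ref{basic}), transport the measures there by $\myr$, apply the result of~\cite{CD} on the separable subspace, and pull back along the isometries of Lemmas~\tref{rel}{rel2} and~\tref{red}{red2}; the only cosmetic difference is that you take the closure of the union of supports, whereas the union of three closed sets is already closed. Where you genuinely diverge is the separation axiom $\W_p(\mu,\nu)=0 \Rightarrow \mu=\nu$. The paper handles this by monotonicity of $\W_q$ in $q$, reducing to $\W_1(\mu,\nu)=0$, and then citing the known fact that $\W_1$ is a metric on $P_1(X)$ (Edwards, Basso). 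You instead prove it from scratch: Jensen gives $\int d_X \,\myd\gamma_n \leq (\int d_X^p \,\myd\gamma_n)^{1/p} \to 0$ for near-optimal couplings $\gamma_n$, so $\int f\,\myd\mu = \int f\,\myd\nu$ for every bounded Lipschitz $f$; testing against $\max(0,1-k\,d_X(-,F))$ and using dominated convergence gives $\mu(F)=\nu(F)$ for closed $F$, and the $\pi$-$\lambda$ theorem finishes. This argument is sound (the marginal property justifies $\int f\,\myd\mu - \int f\,\myd\nu = \int (f(x)-f(y))\,\myd\gamma_n$, and the collection of sets on which two probability measures agree is a $\lambda$-system). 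What your version buys is self-containedness --- it needs neither Radon-ness, separability, nor any external result on $\W_1$ --- at the cost of a somewhat longer measure-theoretic detour; the paper's version is shorter but leans on the cited literature.
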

\begin{proof}
First $\W_p(\mu,\mu) = 0$ for any $\mu \in P_p(X)$ as $\Delta_*\mu$, the pushforward of $\mu$ along the diagonal $\Delta: X \rightarrow X \times X$,  is a coupling between $\mu$ and itself. 
For the converse, choose $\mu , \nu \in P_p(X)$, and  suppose  $\W_p(\mu,\nu) = 0$. 
Then  $\W_1(\mu,\nu) = 0$, as $W_q$ is monotonic in $q$.
%
%
It follows that $\mu = \nu$ as $W_1$ is  a metric on $P_1(X)$ (see, e.g.,~\cite{Edw,Bas}).


Symmetry is evident. To show the triangle inequality, suppose that $\mu,\nu, \omega \in P_p(X)$. Let $C$ be the closed set $\supp(\mu) \cup \supp(\nu) \cup \supp(\omega)$, separable by Lemma~\ref{basic}. 
Then $\myr (\mu)$, $\myr (\nu)$ and $\myr (\omega)$ are probability measures on the separable space $C$, and so by~\cite{CD}, we have $\W_p(\myr (\mu),\myr (\omega)) \leq \W_p(\myr (\mu),\myr (\nu)) + \W_p(\myr (\nu), \myr (\omega))$. Then, by Lemmas~\tref{red}{red2} and~\tref{rel}{rel2}, we see that  $\W_p(\mu,\omega) \leq \W_p(\nu,\omega) + \W_p(\nu,\omega)$, as required. 
\end{proof}

For the second theorem we use the result that the metric space of all probability measures with finite  moments of order $p$ on a complete and separable space 
is also complete and separable, being generated by the rational measures with finite support in a countable basis of  the space. 

\begin{thm} \label{Bolgen} Let $X$ be  a complete metric space. Then $P_p(X)$ is a  complete metric space generated by the finitely supported probability measures on $X$.  
\end{thm}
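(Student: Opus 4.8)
The plan is to reduce everything to the classical case of a complete separable space, which we are allowed to assume, by confining attention to a suitable closed separable subspace and transporting measures back and forth with the operators $\myis$ and $\myr$. The decisive point is that by Lemma~\ref{red} these operators preserve the $\W_p$ distance exactly, so that Cauchy sequences, limits, and approximating sequences all transfer faithfully between $P_p(X)$ and $P_p(C)$ for an appropriate closed $C$. Since a closed subset of a complete metric space is itself complete, once we arrange such a $C$ to also be separable it is complete and separable, and the classical result applies to $P_p(C)$. Note that on a complete separable $C$ every probability measure is Radon, so $P_p(C)$ is exactly the space to which that result refers.

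For density (``generation'' by the finitely supported measures) I would take any $\mu \in P_p(X)$ and set $C = \supp(\mu)$, which is closed, separable by Lemma~\ref{basic}, and hence complete and separable. By Lemma~\tref{rel}{rel2}, $\myr(\mu)$ is a Radon probability measure on $C$ with finite $p$-moment and $\myis(\myr(\mu)) = \mu$. The classical result yields finitely supported measures $\sigma_n$ on $C$ with $\W_p(\sigma_n, \myr(\mu)) \to 0$. Each pushforward $\myis(\sigma_n)$ is a finitely supported measure on $X$, hence lies in $P_p(X)$, and by Lemma~\tref{red}{red1} we get $\W_p(\myis(\sigma_n), \mu) = \W_p(\myis(\sigma_n), \myis(\myr(\mu))) = \W_p(\sigma_n, \myr(\mu)) \to 0$, so the finitely supported measures are dense.

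For completeness I would take a Cauchy sequence $(\mu_n)$ in $P_p(X)$. Each $\supp(\mu_n)$ is separable by Lemma~\ref{basic}, so the countable union $\bigcup_n \supp(\mu_n)$ is separable and its closure $C$ is a closed separable, hence complete and separable, subset of $X$ containing every $\supp(\mu_n)$. By Lemma~\tref{rel}{rel2} each $\myr(\mu_n)$ lies in $P_p(C)$, and by Lemma~\tref{red}{red2} the sequence $(\myr(\mu_n))$ is Cauchy in $P_p(C)$, since $\W_p(\myr(\mu_m), \myr(\mu_n)) = \W_p(\mu_m, \mu_n)$. As $P_p(C)$ is complete we obtain $\myr(\mu_n) \to \rho$ for some $\rho \in P_p(C)$. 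Then $\myis(\rho) \in P_p(X)$ by Lemma~\tref{rel}{rel1}, and $\W_p(\mu_n, \myis(\rho)) = \W_p(\myis(\myr(\mu_n)), \myis(\rho)) = \W_p(\myr(\mu_n), \rho) \to 0$ by Lemma~\tref{red}{red1}, so $(\mu_n)$ converges in $P_p(X)$.

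The main obstacle, and the only spot where any idea beyond bookkeeping is required, is that the measures in play need not share a common separable support, so one cannot simply fix a single separable subspace in advance. For density this is harmless, as a single measure is handled at a time; for completeness the resolution is to exploit the countability of the Cauchy sequence and pass to the closure of the union of its countably many supports, which is separable precisely because a countable union of separable sets is separable. Everything else is routine transport along $\myis$ and $\myr$, using the distance-preservation of Lemma~\ref{red} together with the classical complete-and-separable result.
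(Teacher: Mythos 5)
Your proposal is correct and follows essentially the same route as the paper's own proof: for completeness, pass to the closure $C$ of the union of the (separable, by Lemma~\ref{basic}) supports of the Cauchy sequence, transfer via $\myr$ and $\myis$ using the isometry properties of Lemma~\ref{red} and the identity $\myis(\myr(\mu)) = \mu$ of Lemma~\tref{rel}{rel2}, and invoke the classical complete-separable result on $P_p(C)$; for density, do the same with $C = \supp(\mu)$ for a single measure. The only cosmetic difference is that the paper phrases density with a fixed $\varepsilon$ and the rational finitely supported measures rather than an approximating sequence, which changes nothing of substance.
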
 
\begin{proof} To show that $ P_p(X)$ is complete, let $\langle \mu_i\rangle_i$ be a Cauchy sequence in $P_p(X)$. Let $C$ be the closure in $P_p(X)$ of $\bigcup_i \supp(\mu_i)$. By Lemma~\ref{basic}, each set $\supp(\mu_i)$ is separable, and so $C$ is itself separable, being the closure of a countable union of separable sets. Further, $C$ is complete as it is a closed subset of a complete space.  So $P_p(C)$ is complete. Applying Lemmas~\tref{rel}{rel2} and~\tref{red}{red2}, we see that $\langle \myr (\mu_i) \rangle_i$ is a Cauchy sequence in $P_p(C)$. Let $\mu$ be its limit there.
 As  Lemma~\tref{red}{red1} shows that $\myis$ is an isometric embedding, we see that  $\langle \myis(\myr (\mu_i)) \rangle_i$ is a Cauchy sequence in $P_p(C)$ with limit $\myis (\mu)$. But, by Lemma~\tref{rel}{rel2}, $\langle \myis(\myr (\mu_i)) \rangle_i$ is 
 $\langle \mu_i\rangle_i$ .

To see that the finitely supported probability measures are dense in $P_p(X)$, choose $\mu \in P_p(X)$ and $\varepsilon > 0$. Then, taking $C$ to be the separable closed set  $\supp(\mu)$,  we  have $\myr (\mu) \in P_p(C)$. Then, as $P_p(C)$ is generated by the rational measures with finite support,  there is a finitely supported probability measure $\alpha \in P_p(C)$ at distance $\leq \varepsilon$ from $\myr (\mu)$, and so we see that $\myis( \alpha)$ is at distance $\leq \varepsilon$ from $\mu$. Finally, $\myis( \alpha)$ is finitely supported as, by Lemma~\tref{rel}{rel1}, it has the same support as  $\alpha$.
\end{proof}

It is interesting to consider how the different Wasserstein metrics generate the various $P_p(X)$  starting from the same basis, i.e., the probability measures with finite support. 
As $W_p$ is monotonic in $p$ when $p \leq q$, any sequence Cauchy in the $W_q$ metric is also Cauchy in the $W_p$ metric. So the difference must be that sequences of probability measures with finite support can be Cauchy in the $W_p$ metric, but not in the $W_q$ metric when $p < q$. Examples of this phenomenon can be found by building on Example~\ref{example1}:

\begin{exa} Set
\[D_{q,m}(n) = \left \{\begin{array}{ll}    0  & (n = 0)\\
                                                            \zeta(q+1)^{-1}\frac{1}{n^{q +1}} & (1 \leq n \leq m)\\ 
                                                            1 -  \zeta(q+1)^{-1}\sum^m_{i = 1}\frac{1}{n^{q +1}} & (n = m + 1)\\
                                                            0 & (n >m  + 1)
                                \end{array}\right .\]
Note that $D_{q,m}$ and $D_q$ agree for $1 \leq n \leq m$, but the rest of the mass of $D_q$ is concentrated on $D_{q,m}$ at $m + 1$. Then the sequence $D_{q,m}$ converges to $D_q$ in the $W_p$ metric when $p < q$, but is not Cauchy w.r.t.\ $W_p$ when $p = q$.

\end{exa}

We pause our development to examine the topologies induced by the Wasserstein metrics.
 In the case of separable such spaces it is known that these metrics topologise a suitable notion of weak convergence, see~~\cite[Theorem 7.12]{Vil03}.
It turns out that we can again use our techniques and generalise these results to all complete metric spaces.

Given a metric space $X$ and probability measures $\mu_i$ ($i \geq 0$) and $\mu$  on $X$, we say that the $\mu_i $ \emph{converge weakly} to $\mu$, and write $\mu_i \longrightarrow \mu$ if, for all continuous bounded $f:X \rightarrow \R$, we have $\int f \myd \mu_i \longrightarrow \int f \myd \mu$; an equivalent formulation is that $\underline{\lim}_i \, \mu_i(U) \geq  \mu(U)$ for every open set $U$ (see ~\cite[Theorem 6]{Par}).

 \begin{lem} \label{weak} Let $C \subseteq X$ be a closed subset of a metric space and let $\mu_i$ ($i \geq 0$) and $\mu$ be probability measures on $C$. Then we have:
 \[ \mu_i \longrightarrow \mu \quad \iff  \quad \myis(\mu_i) \longrightarrow \myis(\mu)\]
 where $\iota:C \rightarrow X$ is the inclusion map.
 \end{lem}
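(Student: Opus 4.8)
The plan is to use the open-set (portmanteau) characterisation of weak convergence recorded just above the statement, namely that $\nu_i \longrightarrow \nu$ iff $\underline{\lim}_i\,\nu_i(U) \geq \nu(U)$ for every open $U$. The key observation is purely topological: since $C$ carries the subspace topology, its open sets are exactly the sets $U \cap C$ with $U$ open in $X$, and by the very definition of the pushforward we have $\myis(\lambda)(U) = \lambda(U \cap C)$ for every probability measure $\lambda$ on $C$ and every Borel $U \subseteq X$. These two facts let me translate the open-set condition on $X$ into the open-set condition on $C$ and back, making the two implications symmetric.

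Concretely, for the forward direction I would assume $\mu_i \longrightarrow \mu$ and fix an open $U \subseteq X$. Since $U \cap C$ is open in $C$, the hypothesis gives $\underline{\lim}_i\,\mu_i(U \cap C) \geq \mu(U \cap C)$, which is exactly $\underline{\lim}_i\,\myis(\mu_i)(U) \geq \myis(\mu)(U)$; as $U$ was arbitrary, $\myis(\mu_i) \longrightarrow \myis(\mu)$. For the reverse direction I would assume $\myis(\mu_i) \longrightarrow \myis(\mu)$ and fix an open $V \subseteq C$, written as $V = U \cap C$ for some open $U \subseteq X$. Then $\mu_i(V) = \myis(\mu_i)(U)$ and $\mu(V) = \myis(\mu)(U)$, so the hypothesis applied to $U$ yields $\underline{\lim}_i\,\mu_i(V) \geq \mu(V)$, giving $\mu_i \longrightarrow \mu$.

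An alternative route uses the test-function formulation together with the change-of-variables identity $\int g \,\myd (\myis(\lambda)) = \int (g\rest C)\,\myd\lambda$ for bounded continuous $g\colon X \to \R$. Here the implication from $C$ to $X$ is immediate, since restricting a bounded continuous function stays bounded continuous; but the converse requires extending an arbitrary bounded continuous $f\colon C \to \R$ to a bounded continuous function on all of $X$, which is exactly the Tietze extension theorem (available because metric spaces are normal, and boundedness can be preserved). This extension step is the only genuinely non-trivial ingredient along that path, and it is precisely what the open-set formulation sidesteps, so I expect no real obstacle if I follow the portmanteau approach.
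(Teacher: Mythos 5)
Your proof is correct and takes essentially the same approach as the paper: both arguments use the open-set (portmanteau) formulation of weak convergence, the identity $\myis(\lambda)(U) = \lambda(U \cap C)$, and the fact that the open sets of $C$ are exactly the traces $U \cap C$ of open sets $U \subseteq X$, handling the two directions symmetrically. Your write-up is if anything a little more careful than the paper's (which contains the typo $\nu(U \cap X)$ where $\nu(U \cap C)$ is meant); the Tietze-extension alternative you mention is a valid but unnecessary detour that the paper likewise avoids.
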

 \begin{proof} We use the equivalent formulation of weak convergence in terms of measures of open sets. In one direction, suppose that  
 $\mu_i \longrightarrow \mu $. Then 
 $\myis(\mu_i) \longrightarrow \myis(\mu)$, since for every open subset $U \subseteq X$, and any probability measure $\nu$ on $C$, we have $\myis(\nu)(U)  = \nu(U \cap X)$.
 In the other direction, suppose that  $\myis(\mu_i) \longrightarrow \myis(\mu)$. Then  $\mu_i \longrightarrow \mu $ since for any open set $U$ of $C$ there is an open set $V$ of  $X$ such that $U = V \cap C$ and then, for any probability measure $\nu$ on $C$, we have $\nu(U) = \myis(\nu)(V)$.
   \end{proof}

 \begin{lem} \label{weakp} Let $C$ be a separable closed subset of a complete  metric space $X$, and let $\myi  :   C \rightarrow X$ be the inclusion map. Then, for any probability measures $\alpha_i$ ($i \geq 0$) and $\alpha$ on $C$, and any $x_0 \in C$, the following are equivalent, and hold independently of the choice of $x_0 \in C$: 
  \begin{enumerate}
 \item $\myi_*(\alpha_i) \longrightarrow \myi_*(\alpha)$ and   $\int d_X(x_0, -)^p \myd \myi_*(\alpha_i) \longrightarrow  \int  d_X(x_0, -)^p \myd \myi_*(\alpha)$.
  \item $\alpha_i \longrightarrow \alpha$ and   $\int d_C(x_0, -)^p \myd \alpha_i \longrightarrow  \int  d_C(x_0, -)^p \myd\alpha$.
 \item   For all continuous functions  $ f  :  C   \rightarrow \R$ such that $|f(x)| \leq   c (1 + d_C(x_0, x)^p)$ for all $x\in C$, for some $c \in \R$, 
one has $\int f \myd\alpha_i \longrightarrow  f \myd\alpha$.
 \item  For all continuous functions  $f  :   X  \rightarrow \R$ such that $|f(x)| \leq  c (1 + d_X(x_0, x)^p)$ for all $x\in X$, for some $c \in \R$, 
one has $\int f \myd\myi_* (\alpha_i) \longrightarrow  f \myd\myi_* (\alpha)$.
 \end{enumerate}
 \end{lem}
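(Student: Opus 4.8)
The plan is to treat the pushforward conditions (1) and (4) as cosmetic reformulations, over $X$, of the intrinsic conditions (2) and (3) on $C$, and to concentrate the real work on the equivalence of (2) and (3). I would first settle the independence from $x_0$: if a continuous $f$ satisfies $|f(x)| \leq c(1 + d_C(x_0,x)^p)$, then by the triangle inequality and the inequality $(a+b)^p \leq 2^{p-1}(a^p + b^p)$ (exactly as in the verification that the $p$-moment does not depend on the base point), one has $|f(x)| \leq c'(1 + d_C(x_0',x)^p)$ for every other $x_0' \in C$, with $c'$ absorbing the additive constant $2^{p-1}d_C(x_0,x_0')^p$. Thus the class of admissible test functions in~(3), and likewise in~(4), does not depend on the base point at all, so these two conditions are literally the same statement for every $x_0$; once all four conditions are proved equivalent, the asserted independence transfers to~(1) and~(2).

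For $(1)\iff(2)$, the weak-convergence clauses correspond under Lemma~\ref{weak}, and the moment clauses correspond under the identity $\int d_X(x_0,-)^p \myd (\myis\beta) = \int d_C(x_0,-)^p \myd \beta$, valid for $x_0 \in C$ as in the proof of Lemma~\tref{rel}{rel1}. For $(3)\iff(4)$, one direction is immediate by restriction: if $f:X\to\R$ has $p$-polynomial growth on $X$, then $f\rest C$ is continuous with $p$-polynomial growth on $C$, and $\int f \myd (\myis\beta) = \int f\rest C \myd \beta$ since $\myis\beta$ is concentrated on $C$. For the converse I would take $g:C\to\R$ continuous with $|g|\leq c(1 + d_C(x_0,-)^p)$, extend it to a continuous function $\tilde g_0$ on $X$ by Tietze's theorem ($C$ is closed and metric spaces are normal), and then set $\tilde g = \max(-h,\min(\tilde g_0, h))$ with $h = c(1 + d_X(x_0,-)^p)$. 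Since $|g|\leq h\rest C$, this truncation leaves $g$ unchanged on $C$ while forcing $|\tilde g|\leq h$ on all of $X$; thus $\tilde g$ has $p$-polynomial growth on $X$ and restricts to $g$, so~(4) applies to it and yields~(3).

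The heart of the matter is $(2)\iff(3)$. The direction $(3)\Rightarrow(2)$ is routine: bounded continuous functions have $p$-polynomial growth, giving weak convergence, and the function $d_C(x_0,-)^p$ itself has $p$-polynomial growth, giving the moment convergence. The substantial direction is $(2)\Rightarrow(3)$, whose core is a uniform-integrability estimate for the $p$-th moment along the sequence. Writing $d = d_C(x_0,-)$, for each $R\geq 1$ I would pick a continuous cutoff $\zeta_R:C\to[0,1]$ equal to $1$ on $\{d\leq R\}$ and to $0$ on $\{d\geq R+1\}$. Then $d^p\zeta_R$ is bounded and continuous, so weak convergence gives $\int d^p\zeta_R\myd\alpha_i \to \int d^p\zeta_R\myd\alpha$; subtracting this from the assumed moment convergence $\int d^p\myd\alpha_i \to \int d^p\myd\alpha$ gives
\[
\limsup_i \int d^p(1-\zeta_R)\myd\alpha_i \;\leq\; \int d^p(1-\zeta_R)\myd\alpha ,
\]
and the right-hand side tends to $0$ as $R\to\infty$ by dominated convergence, since $\alpha$ has finite $p$-moment.

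With this tail bound in hand, the final step is a standard three-term split. Given $f$ continuous with $|f|\leq c(1+d^p)$, write $f = f\zeta_R + f(1-\zeta_R)$: the term $f\zeta_R$ is bounded and continuous and so converges by weak convergence, while, writing $\beta$ for either $\alpha_i$ or $\alpha$, the tail obeys $\int |f|(1-\zeta_R)\myd\beta \leq 2c\int d^p(1-\zeta_R)\myd\beta$ (using $1+d^p\leq 2d^p$ on the support $\{d>R\}$ of $1-\zeta_R$), which is exactly the quantity controlled uniformly in $i$ by the estimate above. Taking $\limsup_i$ and then letting $R\to\infty$ yields $\int f\myd\alpha_i \to \int f\myd\alpha$, completing $(2)\Rightarrow(3)$. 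I expect this uniform-integrability estimate to be the main obstacle, since the remaining steps are direct appeals to Lemma~\ref{weak}, restriction/extension manipulations, or tests against a single explicit function; note in particular that the cutoffs and the Tietze extension are available in any metric space, so separability of $C$ and completeness of $X$ play no essential role in this lemma beyond the ambient hypotheses.
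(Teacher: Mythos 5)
Your proof is correct, but it takes a genuinely different route from the paper's. The paper outsources the heart of the matter: since $C$ is separable and complete, it simply cites Villani's Theorem 7.12 for the equivalence of (2) and (3) and their independence of $x_0$, and then closes a cycle with three easy implications --- (4) $\Rightarrow$ (1) trivially (bounded continuous functions and $d_X(x_0,-)^p$ are themselves admissible test functions), (1) $\Rightarrow$ (2) via Lemma~\ref{weak} and the pushforward moment identity, and (3) $\Rightarrow$ (4) by precomposing with the inclusion $\myi$. You instead reprove the Villani equivalence (2) $\Leftrightarrow$ (3) from scratch, via cutoffs $\zeta_R$, the uniform-integrability estimate $\limsup_i \int d^p(1-\zeta_R)\,\myd\alpha_i \leq \int d^p(1-\zeta_R)\,\myd\alpha$, and the three-term split; and you prove (3) $\Leftrightarrow$ (4) directly in both directions, where your (4) $\Rightarrow$ (3) direction requires a Tietze extension followed by truncation. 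Each approach buys something: the paper's is far shorter, and its cyclic structure means no function ever needs to be extended from $C$ to $X$, so Tietze never appears --- indeed, had you observed that (4) $\Rightarrow$ (1) is trivial, your Tietze step would have been unnecessary. Your approach is self-contained and, as you correctly note at the end, shows that neither separability of $C$ nor completeness of $X$ is actually needed for this lemma; the paper needs those hypotheses only because Villani's theorem is stated for Polish spaces. One small caveat: your argument (like the paper's intended reading) tacitly assumes the measures have finite $p$-moments, since dominated convergence of the tail $\int d^p(1-\zeta_R)\,\myd\alpha$ requires $\int d^p \,\myd\alpha < \infty$; this is harmless, as that is the only case in which the moment-convergence clauses of (1) and (2) are meaningful.
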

 \begin{proof}
 As $C$ is separable and complete, by~\cite[Theorem 7.12]{Vil03}, (2) and (3) are equivalent. and hold independently of the choice of $x_0 \in C$. As, further, (4) trivially implies (1), it suffices to prove that (1) implies (2) and (3) implies (4), for any $x_0 \in C$.
 The first of these implications follows using Lemma~\ref{weak} and the fact that 
 $\int d_C(x_0, -)^p \myd \alpha_i = \int d_X(x_0, -)^p \myd \myi_*(\alpha_i) $, 
 for all $i \geq 0$, and similarly for $\alpha$.
 For the second of these implications, one again uses Lemma~\ref{weak} and notes that if  $f  :  X   \rightarrow \R$ is a continuous function such that $|f(x)| \leq   c (1 + d_X(x_0, x)^p)$ for all $x \in X$, for some $c \in \R$, then $f\myi  :   C   \rightarrow \R$ is a continuous function such that $|(f\myi)(x)| \leq   c (1 + d_C(x_0, x)^p)$,  for all $X \in C$ and that 
 $\int f \myd \myis(\alpha_i) = \int (f\myi) \myd \alpha_i$, and similarly for $\alpha$.
\end{proof}

\begin{thm} \label{weak-conv}  Let $X$ be a complete metric space and choose $\mu_i$ ($i \geq 0$) and $\mu$  in $P_p(X)$. Then the following two conditions hold independently of the choice of  $x_0 \in X$ and are equivalent:
 
 \begin{enumerate}

 \item $\mu_i \longrightarrow \mu$ and   $\int d(x_0, -)^p \myd \mu_i \longrightarrow  \int  d(x_0, -)^p \myd\mu$,
 
 \item  For all continuous functions  $f: X  \rightarrow \R$ such that $|f(x)| \leq  c (1 + d(x_0, x)^p)$ for all $x \in X$, for some $c \in \R$, 
one has $\int f \myd\mu_i \longrightarrow  f \myd\mu$.
 \end{enumerate}
 
 \end{thm}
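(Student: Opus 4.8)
The plan is to reduce to the separable complete case, where the Wasserstein topology is already known to metrize this strengthened weak convergence (Lemma~\ref{weakp}, which itself rests on the separable-space result~\cite[Theorem 7.12]{Vil03}), using the same support-restriction machinery that drove the earlier proofs. First I would fix an arbitrary $x_0 \in X$ and let $C$ be the closure in $X$ of $\{x_0\} \cup \supp(\mu) \cup \bigcup_i \supp(\mu_i)$. By Lemma~\ref{basic} every one of these supports is separable, so $C$ is separable, being the closure of a countable union of separable sets together with a single point; and $C$ is complete, being closed in the complete space $X$. By construction $x_0 \in C$ and the support of each of $\mu_i$ and $\mu$ lies in $C$, so Lemma~\tref{rel}{rel2} gives $\mu_i = \myis(\myr(\mu_i))$ and $\mu = \myis(\myr(\mu))$, with $\myr(\mu_i)$ and $\myr(\mu)$ Radon probability measures on $C$ of finite $p$-moment.

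The core of the argument is then merely to read off the equivalence from Lemma~\ref{weakp} applied to this $C$, the point $x_0 \in C$, and the measures $\alpha_i := \myr(\mu_i)$ and $\alpha := \myr(\mu)$. Writing $\mu_i = \myis(\alpha_i)$ and $\mu = \myis(\alpha)$, condition~(1) of the present theorem is \emph{verbatim} condition~(1) of Lemma~\ref{weakp} (recall $d = d_X$), and condition~(2) here is \emph{verbatim} its condition~(4); since that lemma asserts that all four of its conditions are equivalent, we obtain $(1) \iff (2)$ for this particular choice of $x_0$.

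It remains to promote this to independence of $x_0$ over all of $X$, since Lemma~\ref{weakp} supplies independence of the base point only within $C$. Here I would first observe that condition~(2) is plainly base-point independent: if $|f(x)| \leq c(1 + d(x_0,x)^p)$ then, since $d(x_0,x)^p \leq 2^{p-1}(d(x_0,x_1)^p + d(x_1,x)^p)$, the same $f$ satisfies a bound $|f(x)| \leq c'(1 + d(x_1,x)^p)$ with $c'$ depending only on $x_0$ and $x_1$; hence the admissible class of test functions, and so condition~(2) itself, is the same for every base point. Granting this and applying the matching above at each base point, for any $x_0,x_1 \in X$ we get $(1)_{x_0} \iff (2)_{x_0} \iff (2)_{x_1} \iff (1)_{x_1}$, which yields at once the claimed base-point independence of condition~(1) and the equivalence of the two conditions. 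The step I would watch most carefully is precisely this passage from $x_0 \in C$ to arbitrary $x_0 \in X$: the pushforward identity $\int d_X(x_0,-)^p \myd \myis(\alpha_i) = \int d_C(x_0,-)^p \myd \alpha_i$ used inside Lemma~\ref{weakp} needs $x_0 \in C$, which is exactly why $C$ must be enlarged to contain the chosen base point before the lemma is invoked, and why the base-point independence of condition~(2) is what glues the different choices together. Everything else is a routine application of the reduction-to-supports lemmas already in place.
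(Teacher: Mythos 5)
Your proof is correct and takes essentially the same route as the paper: reduce to a separable closed complete set $C$ containing the supports and the relevant base point(s), identify the theorem's conditions (1) and (2) with conditions (1) and (4) of Lemma~\ref{weakp} via $\mu_i = \myis(\myr(\mu_i))$ and $\mu = \myis(\myr(\mu))$, and read off the equivalence. The only (minor) difference lies in the glue for base-point independence over all of $X$: the paper places two arbitrary points $a,b$ into $C$ and invokes the lemma's independence of the base point within $C$, whereas you prove directly that condition (2) is base-point independent via $d(x_0,x)^p \leq 2^{p-1}(d(x_0,x_1)^p + d(x_1,x)^p)$ and then transport the equivalence along the chain $(1)_{x_0} \iff (2)_{x_0} \iff (2)_{x_1} \iff (1)_{x_1}$; both arguments are sound.
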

 \begin{proof} First, choose $a,b \in X$. Next, choose a separable closed set $C \subseteq X$ containing $a$ and $b$ and including  the supports of  the $\mu_i$ ($i \geq 0$) and $\mu$.
 Then, taking   $\alpha_i = r(\mu_i)$ ($i \geq 0$) and $\alpha = r(\mu)$ in Lemma~\ref{weakp}, we see that (1) and (2) hold independently of the choice of $x_0 \in C$  (e.g., whether choosing $a$ or $b$) and are equivalent for any such choice as   $\mu_i = \myi_*(\alpha_i)$, for all $i \geq 0$, and $\mu = \myi_*(\alpha)$.
 As $a$ and $b$   were chosen arbitrarily from $X$,  the conclusion follows.
  \end{proof}

 If either of the two equivalent conditions of Theorem~\ref{weak-conv} hold  for $\mu_i$ ($i \geq 0$) and $\mu$  in $P_p(X)$ and any choice of $x_0 \in X$, where $X$ is a complete metric space, we say that the $\mu_i$ \emph{converge $p$-weakly} to $\mu$, and write $\mu_i \longrightarrow_p \mu$.


 %
 \begin{thm} \label{mettop} For any complete metric space $X$,  the Wasserstein distance $W_p$ metricises $p$-weak convergence in $P_p(X)$.
\end{thm}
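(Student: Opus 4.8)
The statement to prove is that, for $\mu_i$ ($i \geq 0$) and $\mu$ in $P_p(X)$, one has $W_p(\mu_i,\mu) \to 0$ if and only if $\mu_i \longrightarrow_p \mu$. The plan is to reduce, once again, to the separable complete case by restricting all the measures to a common separable closed subset, and there to invoke the known fact (for Polish spaces) that $W_p$-convergence is equivalent to weak convergence together with convergence of the $p$-th moments, i.e.\ to condition (2) of Lemma~\ref{weakp}.

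Concretely, I would first note that each $\supp(\mu_i)$ and $\supp(\mu)$ is separable by Lemma~\ref{basic}, so the closure $C$ of $\bigcup_i \supp(\mu_i) \cup \supp(\mu)$ is a separable closed subset of $X$; being closed in a complete space, $C$ is itself complete, hence Polish, and non-empty, so I may fix $x_0 \in C$. Setting $\alpha_i = \myr(\mu_i)$ and $\alpha = \myr(\mu)$, Lemma~\tref{rel}{rel2} gives $\alpha_i, \alpha \in P_p(C)$ with $\mu_i = \myis(\alpha_i)$ and $\mu = \myis(\alpha)$, while Lemma~\tref{red}{red1} (the isometry property of $\myis$) gives $W_p(\mu_i,\mu) = W_p(\alpha_i,\alpha)$. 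Thus the left-hand convergence is unaffected by passing to $C$.

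On the Polish space $C$, Villani's theorem (\cite[Theorem 7.12]{Vil03}, as already used in Lemma~\ref{weakp}) tells us that $W_p(\alpha_i,\alpha) \to 0$ holds exactly when $\alpha_i \longrightarrow \alpha$ weakly and $\int d_C(x_0,-)^p \myd\alpha_i \to \int d_C(x_0,-)^p \myd\alpha$ --- that is, precisely condition (2) of Lemma~\ref{weakp}. By that lemma, condition (2) is equivalent to condition (1), namely $\myis(\alpha_i) \longrightarrow \myis(\alpha)$ weakly together with convergence of the $X$-moments, which is exactly condition (1) of Theorem~\ref{weak-conv} for $\mu_i$ and $\mu$ at the chosen $x_0$. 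Since, by Theorem~\ref{weak-conv}, that condition holds independently of $x_0 \in X$ and is the definition of $\mu_i \longrightarrow_p \mu$, chaining the three equivalences yields $W_p(\mu_i,\mu)\to 0 \iff \mu_i \longrightarrow_p \mu$, as required.

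The argument is largely an assembly of earlier results, so I expect no deep obstacle; the points that need care are that a single separable closed $C$ can be made to contain the supports of the entire sequence (handled by taking the closure of a countable union of separable sets) and that the base point $x_0$ may be taken inside $C$, so that Lemma~\ref{weakp} and the $x_0$-independence from Theorem~\ref{weak-conv} apply cleanly. The only genuinely external input is Villani's equivalence on Polish spaces, which supplies the link between metric convergence and the weak-convergence-plus-moments condition on $C$.
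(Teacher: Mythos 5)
Your proof is correct and follows essentially the same route as the paper's: reduce to a separable closed set $C$ containing all the supports, transfer the metric side via the isometric embedding of Lemma~\tref{red}{red1}, transfer the weak-convergence side via the (1)$\iff$(2) equivalence of Lemma~\ref{weakp}, and invoke Villani's Polish-space characterisation of $W_p$-convergence on $C$. The only cosmetic difference is that you cite \cite[Theorem 7.12]{Vil03} where the paper cites \cite[Theorem 6.9]{Vil08}; both supply the same equivalence between metric convergence and weak-plus-moment convergence on the Polish space $C$.
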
 
\begin{proof}  Choose $\mu_i$ ($i \geq 0$) and $\mu$ in $P_p(X)$ to show that $\mu_i \longrightarrow_p \mu$ iff $\mu_i$ converges to $\mu$ w.r.t.\ the Wasserstein distance $W_p$ in $X$.
To this end, let $C \subseteq X$ be a closed separable set containing the supports of the $\mu_i$ and $\mu$ (and so necessarily nonempty), and set $\alpha_i = r(\mu_i)$ and $\alpha = r(\mu)$ and let $x_0$ be an element of $C$.  By~\cite[Theorem 6.9]{Vil08} we have that $\alpha_i \longrightarrow_p \alpha$ iff $\alpha_i$ converges to $\alpha$ w.r.t.\ the Wasserstein distance $W_p$ in $C$.
So it suffices to show (i) that $\mu_i \longrightarrow_p \mu$ iff $\alpha_i \longrightarrow_p \alpha$, and (ii) that $\mu_i$ converges to $\mu$ w.r.t.\ the Wasserstein distance $W_p$ in $X$ iff $\alpha_i$ converges to $\alpha$ w.r.t.\ the Wasserstein distance $W_p$ in $C$.

That (i) holds follows from  the definition of $p$-weak convergence and the equivalence between parts (1) and (2) of Lemma~\ref{weakp} applied to the $\alpha_i$, $\alpha$, and $x_0$. That (ii) holds follows from the fact that, by Lemma~\tref{red}{red1}, the inclusion of $C$ in $X$ is an isometric embedding.
\end{proof}


\section{Wasserstein algebras} \label{Wasalg}

\newcommand{\myomit}[1]{}

We begin with an account of barycentric algebras. We then move on to quantitative algebras, by which we mean algebras over metric spaces. After some general considerations on these we move on to Wasserstein algebras of order $p$, and our main theorem, characterising the Wasserstein monads $P_p$.

 A  \emph{barycentric algebra} (or  \emph{abstract convex set}) is a
  set $X$ equipped with binary 
  operations $+_r$ for every real number $r \in [0,1]$
  such that the following equational laws
hold:
\begin{align} 
x+_1y &= x \label{B1} \tag{B1}\\
x+_rx& = x \label{B2} \tag{B2} \\
x+_ry& = y+_{1-r}x\label{SC} \tag{SC}\\  
(x+_py)+_r z &= x+_{pr}(y+_{\frac{r-pr}{1-pr}}z)\ \ \  \mbox{ provided }
r<1, p<1\label{SA} \tag{SA}
\end{align}
SC stands for \emph{skew commutativity} and SA for \emph{skew associativity}.
Homomorphisms of barycentric algebras are termed \emph{affine}.

One can inductively define finite convex sums in a barycentric algebra by:
\[\sum_{i = 1}^ n r_ix_i = x_1 +_{r_1} \sum_{i = 2}^n \frac{r_i}{1 - r_1}x_i\]
for $n > 2$, $r_1 \neq 1$, with the other cases being evident.
Note that affine maps preserve such finite convex sums. Finite convex sums have been axiomatised as \emph{convex spaces}  (or \emph{convex algebras})
where the sums are required to obey the  \emph{projection} axioims:
\[\sum_{i=1}^n\delta_{ik}x_i = x_k \]
where $\delta_{ik}$ is the Kronecker symbol,  and the \emph{barycentre} axioms: 
\[ \sum_{i=1}^n r_i\big(\sum_{k=1}^m s_{ik}x_k\big) =
      \sum_{k=1}^m\big(\sum_{i=1}^n r_i s_{ik}\big)x_k \]
       %
With the above definition of finite sums, barycentric algebras form convex spaces, and indeed the two categories of algebras are equivalent under this correspondence.


Probability measures on a measurable space, and subclasses of them, naturally form barycentric algebras under the standard binary pointwise convex combination  operations:
\[(\mu +_r \nu) (B) = r\mu(B) + (1-r)\nu(B)\] 
In particular, for any set $X$, the barycentric algebra $P_{f}(X)$ of probability measures over $X$ with finite support is the free barycentric algebra over $X$, with universal arrow the Dirac delta function (see~\cite{Neu,RS}); if $f:X \rightarrow B$ is a map to a barycentric algebra $B$, then the unique extension of $f:X \rightarrow B$ along $\delta$ is given by:
\[\ov{f}(\mu) = \sum_{s \in \supp(\mu)} \mu(s)f(s) \]
Barycentric algebras originate with the work of M.H.~Stone in~\cite{Sto42}, and convex spaces with that of T.~\v{S}wirszcz in~\cite{Swi74}.  For further bibliographic references and historical discussion, see, e.g.,~\cite{KP}.

Turning to quantitative algebras, we work with the category of metric spaces and nonexpansive maps, and its subcategory of complete metric spaces.
 These categories have all finite products with the one-point metric space as the final object and with the max metric on binary products $X \times Y$, where:
 \[d_{X \times Y}(\tuple{x,y},\tuple{x',y'}) = \max \{d_X(x,x'),d_Y(y,y')\}\]
We remark that nonexpansive maps are continuous.

A \emph{finitary signature} $\Sigma$ is a collection of operation symbols $f$  and an assignment of an arity $n \in \mathbb{N}$ to each; given such a signature, we write $f:n$ to indicate that $f$ is an operation symbol of arity $n$. A \emph{(metric space) quantitative $\Sigma$-algebra $(X, f_X \; (f \in \Sigma))$} is then a metric space $X$ equipped with a nonexpansive function $f_X: X^n \rightarrow X$ for each operation symbol $f:n$. We often omit the suffix on the operation symbol and also confuse the metric space with the algebra.  A \emph{homomorphism} $h: X \rightarrow Y$ of $\Sigma$-algebras is a nonexpansive map $h: X \rightarrow Y$ such that for all $f:n$ and $x_1,\ldots,x_n \in X$ we have:
\[h(f_X(x_1,\ldots,x_n)) = f_Y(h(x_1),\ldots, h(x_n))\]
This defines a category of quantitative $\Sigma$-algebras and homomorphisms; it has an evident subcategory of complete quantitative $\Sigma$-algebras.



For any $p \geq 1$, a \emph{Wasserstein (barycentric) algebra of order $p$}  is  a quantitative algebra $(X, +_r: X^2 \rightarrow X \, (r \in [0,1]))$
forming a barycentric algebra such that for all $x,x',y,y' \in X$ we have:

\begin{equation}d(x +_r y, x' +_r y')^p   \leq  rd(x,x')^p  + (1-r)d(y,y')^p   \label{Was-alg} \tag{$\ast$} \end{equation}

                                                             We remark that the hypothesis of non-expansiveness of the $+_r$  is redundant as, setting \[m = \max\{d(x,x'),d(y,y')\}\] we have:
\[\begin{array}{lcl}d(x +_r y, x' +_r y') 
                                                             & \leq & (rd(x,x')^p  + (1-r)d(y,y')^p)^{1/p} \\
                                                             & \leq & (rm^p  + (1-r)m^p)^{1/p} \\
                                                             & = & m 
\end{array}\]

We gather some other remarks about convex combinations. Recollect that a function $f: X \rightarrow Y$ between metric spaces is \emph{$\alpha$-H\"{o}lder continuous} if, for all $x,y \in X$, we have 
$d(fx,fy) \leq Md(x,y)^{\alpha}$ for some constant $M$.

\begin{lem} \label{was} Let $(X, +_r \, (r \in [0,1]))$ be a Wasserstein algebra of order $p$. Then:
\begin{enumerate}
\item \label{was1} The functions $+_r$ are $r^{1/p}$-Lipschitz in their first argument and $(1 - r)^{1/p}$-Lipschitz in their second argument.
\item \label{was2} Considered as a function of $r$, $x+_r y$ is 
$1/p$-H\"{o}lder continuous.
\item \label{was3} The following  generalisation of equation~(\ref{Was-alg})  to finite convex sums holds:
\[\begin{array}{lclr}d(\sum_{i= 1}^n r_ix_i, \sum_{i=1}^n r_ix'_i)^p 
                                                             & \leq & \sum_{i= 1}^n r_i d(x_i,x'_i)^p 
\end{array}\]
\end{enumerate}
\end{lem}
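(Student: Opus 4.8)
The plan is to derive all three parts from the defining inequality $(\ast)$ together with the barycentric laws, most importantly idempotence $(\ref{B2})$; the only genuine work lies in part $(\ref{was2})$. Part $(\ref{was1})$ should be immediate by specialisation of $(\ast)$: putting $y = y'$ makes the second summand on the right vanish, so $d(x +_r y, x' +_r y)^p \leq r\, d(x,x')^p$, and taking $p$-th roots gives the claimed $r^{1/p}$-Lipschitz bound in the first argument; putting $x = x'$ gives the $(1-r)^{1/p}$-Lipschitz bound in the second argument symmetrically.

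For part $(\ref{was3})$ I would induct on $n$ using the recursive definition of finite convex sums. The base cases $n = 1$ and $n = 2$ are respectively trivial and exactly $(\ast)$. For the inductive step with $r_1 \neq 1$, I write
\[\sum_{i=1}^n r_i x_i = x_1 +_{r_1} s, \qquad \sum_{i=1}^n r_i x'_i = x'_1 +_{r_1} s',\]
where $s = \sum_{i=2}^n \frac{r_i}{1-r_1} x_i$ and $s'$ is defined analogously, both convex sums of $n-1$ terms. Applying $(\ast)$ to the outer combination and then the induction hypothesis to $d(s,s')^p$, the two factors of $1 - r_1$ cancel and the bound collapses to $\sum_{i=1}^n r_i\, d(x_i,x'_i)^p$. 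The degenerate case $r_1 = 1$ forces the sum to be $x_1$ by projection, where the inequality holds with equality.

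The crux, and the step I expect to be the main obstacle, is part $(\ref{was2})$. Here $(\ast)$ cannot be used directly, since it compares convex combinations with a common $r$, whereas $x +_r y$ and $x +_s y$ differ precisely in the subscript. My idea is to convert the dependence on $r$ into a dependence on an argument. Assuming $r > s$ (so $s < 1$; the case $r < s$ is symmetric and $r = s$ is trivial), a short computation with convex sums---or an application of $(\ref{SA})$ and $(\ref{SC})$---gives the reparametrisation
\[x +_r y = x +_{\lambda} (x +_s y), \qquad \lambda = \frac{r - s}{1 - s} \in (0,1].\]
Writing $z = x +_s y$ and using idempotence to replace the right $z$ by $z +_{\lambda} z$, inequality $(\ast)$ yields $d(x +_{\lambda} z, z)^p \leq \lambda\, d(x,z)^p$. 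A further use of idempotence, writing $x = x +_s x$, gives $d(x, x +_s y)^p \leq (1-s)\, d(x,y)^p$. Substituting the second estimate into the first, the factor $1 - s$ cancels the denominator of $\lambda$ and leaves
\[d(x +_r y, x +_s y) \leq (r - s)^{1/p}\, d(x,y),\]
exhibiting $1/p$-H\"{o}lder continuity in $r$ with constant $d(x,y)$. The delicate points are spotting the reparametrisation identity and confirming that the boundary value $s = 1$ never arises under the standing assumption $r > s$.
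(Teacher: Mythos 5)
Your proposal is correct in all three parts, and parts (\ref{was1}) and (\ref{was3}) coincide with the paper's proof (the paper dispatches (\ref{was3}) with the words ``straightforward induction''; your induction, with the cancellation of the factors $1-r_1$, is exactly that computation). For part (\ref{was2}) you take a genuinely different decomposition. The paper sets $e = r - s$ and writes both combinations over a \emph{common} second argument $z = x +_{s/(1-e)} y$, namely $x +_r y = x +_e z$ and $x +_s y = y +_e z$, so that a single appeal to the first-argument Lipschitz bound of part (\ref{was1}) gives $d(x +_r y, x +_s y) \leq e^{1/p} d(x,y)$ in one step (the boundary case $e = 1$, i.e.\ $r = 1$ and $s = 0$, is handled separately). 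You instead absorb $x +_s y$ itself into the combination, writing $x +_r y = x +_{\lambda} (x +_s y)$ with $\lambda = (r-s)/(1-s)$, and apply $(\ast)$ twice via idempotence; the factor $1-s$ produced by the second application cancels the denominator of $\lambda$, yielding the identical bound $(r-s)^{1/p} d(x,y)$. Note that your two idempotence steps are really the two Lipschitz bounds of part (\ref{was1}) in disguise: $d(x +_{\lambda} z, z) = d(x +_{\lambda} z, z +_{\lambda} z) \leq \lambda^{1/p} d(x,z)$ is the first-argument bound, and $d(x, x +_s y) = d(x +_s x, x +_s y) \leq (1-s)^{1/p} d(x,y)$ is the second-argument bound. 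So the two arguments are close in spirit: the paper's buys a one-step estimate from a slightly less obvious reparametrisation (a common tail with the difference pushed into the first arguments), while yours uses the more natural peel-off identity at the cost of a second estimate and a cancellation. Your handling of the degenerate cases is also sound: under $r > s$ one has $s < 1$ so $\lambda$ is well defined, and $\lambda = 1$ (i.e.\ $r = 1$) causes no harm.
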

\begin{proof}
\begin{enumerate}
\item Taking $y = y'$ ($x = x')$ in (\ref{Was-alg}) we respectively obtain:  
\[d(x +_r y, x' +_r y)  \leq r^{1/p}d(x,x') \quad \mbox{and} \quad d(x +_r y, x +_r y')  \leq (1-r)^{1/p}d(y,y')\]
as required.
\item Fix $x,y \in X$ and choose $r,s \in [0,1]$. Suppose that $s \leq r$ and set $e = r - s$.
Assuming $e \neq 1$, we have
\[x +_r y = x +_e (x +_{s/1-e} y) \mbox{   and   } x +_s y = y +_e (x +_{s/1-e} y)\]
So by part 1 we have
\[d(x +_r y,x +_s y) \leq e^{1/p}d(x,y)\]
This also holds when $e = 1$ as then $r = 1$ and $s =0$.
As $e = d(r,s)$, we therefore have:
\[d(x +_r y,x +_s y) \leq d(x,y)d(r,s)^{1/p}\]
By symmetry this also holds when $r \leq s$. As $p \geq1$, $0 < 1/p \leq 1$, so, as a function of $r$, $x +_r y$ is $1/p$-H\"{o}lder continuous, as required.
\item This is a straightforward induction. \qedhere
\end{enumerate}
\end{proof}

%



For any metric space $X$, we turn $P_p(X)$ into a barycentric algebra by equipping it with the standard convex combination operations.
%
%
It is evident that $\mu +_r \nu$  has  finite  moment of order $p$  if $\mu$ and $\nu$ do.

\begin{lem} \label{lem:Wasalg} For any metric space $X$, $P_p(X)$  
forms a Wasserstein algebra of order $p$.
\end{lem}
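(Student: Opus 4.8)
The plan is to verify the two requirements in the definition of a Wasserstein algebra of order $p$: that $P_p(X)$ is a barycentric algebra under the pointwise convex combinations, and that it satisfies the Wasserstein condition~(\ref{Was-alg}). The barycentric laws (\ref{B1})--(\ref{SA}) are immediate from the pointwise definition $(\mu +_r \nu)(B) = r\mu(B) + (1-r)\nu(B)$, since each is an identity of real-valued set functions; closure of $P_p(X)$ under $+_r$ follows from the finite-moment remark already noted together with the routine fact that a convex combination of Radon measures is again Radon. By the remark following~(\ref{Was-alg}), nonexpansiveness of the $+_r$ need not be checked separately, as it is a consequence of~(\ref{Was-alg}); so the entire content of the lemma lies in establishing~(\ref{Was-alg}).

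For~(\ref{Was-alg}), I would fix $\mu,\mu',\nu,\nu' \in P_p(X)$ and $r \in [0,1]$ and argue via couplings. Given any coupling $\gamma$ between $\mu$ and $\mu'$ and any coupling $\gamma'$ between $\nu$ and $\nu'$, I form the pointwise convex combination $\gamma +_r \gamma'$, a probability measure on $X^2$. Its left marginal is
\[(\pi_0)_*(\gamma +_r \gamma') = r(\pi_0)_*\gamma + (1-r)(\pi_0)_*\gamma' = \mu +_r \nu,\]
and similarly its right marginal is $\mu' +_r \nu'$; hence $\gamma +_r \gamma'$ is a coupling between $\mu +_r \nu$ and $\mu' +_r \nu'$. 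Since integration is linear in the measure,
\[\int d_X^p \myd (\gamma +_r \gamma') = r \int d_X^p \myd\gamma + (1-r)\int d_X^p \myd\gamma',\]
and therefore, by the definition of $\W_p$ as an infimum over couplings,
\[\W_p(\mu +_r \nu, \mu' +_r \nu')^p \leq r\int d_X^p \myd\gamma + (1-r)\int d_X^p \myd\gamma'.\]

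To conclude, I would take infima on the right-hand side one factor at a time: the left-hand side does not depend on $\gamma'$, so taking the infimum over $\gamma'$ replaces $\int d_X^p \myd\gamma'$ by $\W_p(\nu,\nu')^p$, and then taking the infimum over $\gamma$ replaces $\int d_X^p \myd\gamma$ by $\W_p(\mu,\mu')^p$, which is exactly~(\ref{Was-alg}). The only points requiring any care are that all the integrals are finite, which is guaranteed by the finite $p$-moment hypothesis together with the integrable bound on $d_X^p$ recorded when $\W_p$ was introduced, and the verification of the marginals of $\gamma +_r \gamma'$. Neither is a genuine obstacle, so I expect the argument to be short; the one idea that makes it work is the introduction of the auxiliary coupling $\gamma +_r \gamma'$, after which everything reduces to linearity of integration and elementary manipulation of infima.
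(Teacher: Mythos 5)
Your proposal is correct and follows essentially the same route as the paper's proof: both form the convex combination of a coupling between $\mu,\mu'$ and a coupling between $\nu,\nu'$, observe that it couples $\mu +_r \nu$ with $\mu' +_r \nu'$, use linearity of integration, and then split the infimum over the two independent couplings. Your additional checks (the marginal computation, closure of $P_p(X)$ under $+_r$ via Radon-ness and finite moments) are details the paper treats as evident, so they add care rather than a new idea.
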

\begin{proof}

We need only show that for any $\mu, \mu', \nu,\nu' \in P_p(X)$ and $r\in [0,1]$ we have:
\[\begin{array}{lcl}\W_p(\mu +_r \nu, \mu' +_r \nu')^p 
                                                             & \leq & r\W_p(\mu,\mu')^p  + (1-r)\W_p(\nu,\nu')^p 
\end{array}\]
(recalling that theWasserstein condition implies nonexpansivness of the operations).
To this end, choose $\mu, \mu', \nu,\nu' \in P_p(X)$. Let $\alpha$ be a 
coupling between $\mu$ and $\mu'$ and let $\beta$ be a 
coupling between $\nu$ and $\nu'$. Then  $\alpha +_r \beta$ is a 
coupling between $\mu +_r \nu$ and $\mu' +_r \nu'$, and we have:
\[\begin{array}{lcl }\int d^p_X \myd (\alpha +_r \beta) 
                                                                                                      & = &     \  \int d^p_X \myd \alpha +_r \int d^p_X \myd \beta\\
                                                                                                   
\end{array}\]
The conclusion then follows, as we have:
\[\begin{array}[b]{lcl}
\W_p(\mu +_r \nu,\mu' +_r \nu')^p & = &  \inf_{\gamma} \left  (\int d^p_X \myd\gamma \right)\\
                       & \leq &  \inf_{\alpha,\beta} \left  (\int d^p_X \myd(\alpha +_r \beta) \right )\\
                                              & = &  \inf_{\alpha,\beta} \left  (\int d^p_X \myd \alpha +_r \int d^p_X \myd \beta \right )\\
                                              & = &  \inf_{\alpha} \left  (\int d^p_X \myd \alpha \right) +_r  \inf_{\beta} \left  ( \int d^p_X \myd \beta \right )\\
                                              & = & \W_p(\mu,\nu)^p +_r \W_p(\mu',\nu')^p
\end{array} \qedhere\]
\end{proof}


We next turn to characterising the free Wasserstein algebras of order $p$. For any metric space X, let $F_p(X)$ be the sub-Wasserstein algebra  of order $p$ of $P_p$ that consists of the finitely supported probability measures.
\begin{thm} \label{atfree} For any metric space X, $F_p(X)$ is the free Wasserstein algebra of order $p$ over $X$, with universal arrow the Dirac delta function $\delta:X \rightarrow F_p(X)$.
\end{thm}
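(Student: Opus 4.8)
The plan is to reduce the statement to the already-cited freeness of finitely supported measures as a \emph{barycentric} algebra, and then to isolate the single genuinely metric fact that remains, namely nonexpansiveness of the induced affine map. Forgetting the metric, $F_p(X)$ carries exactly the underlying barycentric algebra of $P_f(X)$, the free barycentric algebra over the set $X$, with universal arrow $\delta$. Hence, given any Wasserstein algebra $A$ of order $p$ and any nonexpansive $f:X\rightarrow A$, there is a unique affine map $\ov{f}:F_p(X)\rightarrow A$ with $\ov{f}\circ\delta = f$, given explicitly by $\ov{f}(\mu)=\sum_{s\in\supp(\mu)}\mu(s)f(s)$. Since every homomorphism of Wasserstein algebras is in particular affine, this already settles uniqueness, and reduces existence to checking that this $\ov{f}$ is nonexpansive.

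The core step is therefore the estimate $d_A(\ov{f}(\mu),\ov{f}(\nu))\le \W_p(\mu,\nu)$ for finitely supported $\mu,\nu$. I would fix an arbitrary coupling $\gamma$ between $\mu$ and $\nu$; by Lemma~\ref{coup} its support lies in $\supp(\mu)\times\supp(\nu)$, so $\gamma$ is itself finitely supported, say $\gamma=\sum_{i,j}\gamma_{ij}\,\delta(x_i,y_j)$ with marginal conditions $\sum_j\gamma_{ij}=\mu(x_i)$ and $\sum_i\gamma_{ij}=\nu(y_j)$. The marginal conditions let me rewrite both measures as convex sums over the \emph{common} index set of the coupling, $\mu=\sum_{i,j}\gamma_{ij}\,\delta(x_i)$ and $\nu=\sum_{i,j}\gamma_{ij}\,\delta(y_j)$ in $F_p(X)$. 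Since $\ov{f}$ is affine it preserves these finite convex sums, so $\ov{f}(\mu)=\sum_{i,j}\gamma_{ij}f(x_i)$ and $\ov{f}(\nu)=\sum_{i,j}\gamma_{ij}f(y_j)$ are convex combinations over the same index set with the same weights. Applying the finite-sum form of the Wasserstein inequality, Lemma~\tref{was}{was3}, followed by nonexpansiveness of $f$, gives
\[ d_A(\ov{f}(\mu),\ov{f}(\nu))^p \le \sum_{i,j}\gamma_{ij}\,d_A(f(x_i),f(y_j))^p \le \sum_{i,j}\gamma_{ij}\,d_X(x_i,y_j)^p = \int d^p_X\,\myd\gamma. \]
Taking the infimum over all couplings $\gamma$ then yields $d_A(\ov{f}(\mu),\ov{f}(\nu))^p\le\inf_\gamma\int d^p_X\,\myd\gamma=\W_p(\mu,\nu)^p$, which is the desired bound.

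The subtlety where I expect the real work to lie is precisely this reindexing: using the coupling to express $\ov{f}(\mu)$ and $\ov{f}(\nu)$ as convex combinations over one shared index set, which is what makes Lemma~\tref{was}{was3} applicable and is the discrete shadow of the optimal-transport argument. Everything else is routine bookkeeping that I would dispatch briefly: that $\delta$ is a legitimate (indeed isometric) universal arrow in the category of metric spaces, since the only coupling between two Dirac measures is the corresponding product Dirac, so $\W_p(\delta(x),\delta(y))=d_X(x,y)$; and that $F_p(X)$ really is a sub-Wasserstein algebra of $P_p(X)$, which follows from Lemma~\ref{lem:Wasalg} together with the observation that $\mu+_r\nu$ is finitely supported when $\mu,\nu$ are.
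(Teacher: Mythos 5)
Your proposal is correct and follows essentially the same route as the paper's own proof: reduce to the known freeness of finitely supported measures as the free barycentric algebra (giving uniqueness and the explicit formula for $\ov{f}$), then prove nonexpansiveness by taking an arbitrary coupling $\gamma$, using its marginals to rewrite $\mu$ and $\nu$ as convex sums over the coupling's index set, applying affineness of $\ov{f}$, Lemma~\tref{was}{was3}, nonexpansiveness of $f$, and finally taking the infimum over couplings. Even the side observations — that $\delta$ is isometric because the only coupling between Dirac measures is the product Dirac, and that the reindexing via the coupling is the crux — match the paper's argument.
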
 
\begin{proof} 
First,  $\delta:X \rightarrow F_p(X)$ is an isometric embedding, since, for any $x,y \in X$, the only 
coupling between $\delta(x)$ and $\delta(y)$ is $\delta(\tuple{x,y})$, and so we have:
\[W_p(\delta(x)), \delta(y)) = (d(x,y)^p)^{1/p} = d(x,y)\]

Next, let $f: X \rightarrow W$ be any nonexpansive affine map to a Wasserstein algebra $B$ of order $p$. 
As a set, $F_p(X) $ is the free barycentric algebra over $X$ with unit the Dirac function, 
and the unique affine map $\ov{f}: F_p(X) \rightarrow W$ extending $f$ along the unit is given by the formula:  
\[\ov{f}(\mu) = \sum_{s \in \supp(\mu)} \mu(s)f(s) \]
So it only remains to show that $\ov{f}$ is nonexpansive. Let $\mu$, $\nu$ be finitely supported probability measures with respective finite non-empty supports $S$ and $T$. Let $\gamma$ be a 
coupling between them. Then $\gamma$ has support $\subseteq S \times T$.  Consequently we can write it as a convex sum, as follows:
\[\gamma = \sum_{s \in S, t \in T} \gamma(\tuple{s,t})\delta(\tuple{s,t})\]
As $\mu$ and $\nu$ are the marginals of $\gamma$, we have:
\[\mu = \sum_{s \in S, t \in T} \gamma(\tuple{s,t})\delta(s) \quad \mbox{and}\quad \nu = \sum_{s \in S, t \in T} \gamma(\tuple{s,t})\delta(t)\]
We can now calculate:
\[\begin{array}{lcl}
 d(\ov{f}(\mu),\ov{f}(\nu)) & = & d(\ov{f}(\sum_{s \in S, t \in T} \gamma(\tuple{s,t})\delta(s)),\ov{f}(\sum_{s \in S, t \in T} \gamma(\tuple{s,t})\delta(t)))\\
                                        & = & d(\sum_{s \in S, t \in T} \gamma(\tuple{s,t})\ov{f}(\delta(s)),\sum_{s \in S, t \in T} \gamma(\tuple{s,t})\ov{f}(\delta(t)))\\
                                        & = & d(\sum_{s \in S, t \in T} \gamma(
                                        \tuple{s,t})f(s),(\sum_{s \in S, t \in T} \gamma(\tuple{s,t})f(t))\\
                                                                                &&  \hspace{165pt} (\mbox{as $f = \ov{f}\delta$})\\
                                        & \leq  & \left ( \sum_{s \in S, t \in T} \gamma(\tuple{s,t})d(f(s),f(t))^p \right )^{1/p} \\
                                        &&  \hspace{165pt}  (\mbox{by Lemma~\tref{was}{was3}})\\
                                        & \leq  &  \left (\sum_{s \in S, t \in T} \gamma(\tuple{s,t})d(s,t)^p \right )^{1/p} \\
                                                                                && \hspace{165pt} (\mbox{as $f$ is nonexpansive})\\
                                       & = &  \left ( \int d^p_X \myd\gamma \right )^{1/p} 
                                        \end{array}\]

As $\gamma$ was an arbitrary 
coupling between $\mu$ and $\nu$ we therefore have
\[ d(\ov{f}(\mu),\ov{f}(\nu)) \leq W_p(\mu,\nu)\]
as required.
\end{proof}

To characterise the free complete Wasserstein algebras of order $p$, we need two lemmas, one on the completion of metric spaces, and the other on the completion of quantitative $\Sigma$-algebras.
The usual completion $\ov{X}$ of a metric space $X$ by equivalence classes $[\bx]$ of Cauchy sequences is the free complete metric space over $X$, with universal arrow the isometric embedding $c_X: X \rightarrow \ov{X}$, where $c_X(x) = [\tuple{x}_{n \in \mathbb{N}}]$. That is, for any complete metric space $Y$ and nonexpansive function $f: X \rightarrow Y$ there is a unique nonexpansive function $\ov{f}: \ov{X} \rightarrow Y$ extending $f$ along $c_X$. This function is given by the formula:
\[\ov{f}([\mathbf{x}]) = \lim_n f(\mathbf{x}_n)\]
There is a simple criterion for when an isometric embedding of a metric space in another is a universal arrow:
\begin{lem} \label{gen}Let $\theta: X \rightarrow Y$ be an isometric embedding of  metric spaces with $Y$ complete, and $\theta(X)$ generating $Y$. Then $Y$ is the free complete metric space over $X$, with universal arrow  $\theta$.
\end{lem}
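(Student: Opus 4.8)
The plan is to leverage the fact, just established in the paragraph preceding the lemma, that the usual completion $\ov{X}$ together with its isometric embedding $c_X: X \rightarrow \ov{X}$ is the free complete metric space over $X$. I would show that $Y$ is isometrically isomorphic to $\ov{X}$ via a map intertwining $\theta$ and $c_X$; since the property of being a free object transports across an isomorphism that commutes with the universal arrows, this gives the result immediately.

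First I would apply the universal property of $\ov{X}$ to the nonexpansive map $\theta: X \rightarrow Y$ (its codomain $Y$ being complete by hypothesis) to obtain the unique nonexpansive $\ov{\theta}: \ov{X} \rightarrow Y$ with $\ov{\theta} \circ c_X = \theta$, given explicitly by $\ov{\theta}([\bx]) = \lim_n \theta(\bx_n)$. Next I would check that $\ov{\theta}$ is an isometry: for $[\bx],[\bby] \in \ov{X}$, using continuity of the metric, the fact that $\theta$ is an isometric embedding, and the definition of the distance on $\ov{X}$ as $\lim_n d(\bx_n,\bby_n)$, one gets $d(\ov{\theta}[\bx],\ov{\theta}[\bby]) = \lim_n d(\theta\bx_n,\theta\bby_n) = \lim_n d(\bx_n,\bby_n) = d_{\ov{X}}([\bx],[\bby])$. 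In particular $\ov{\theta}$ is injective.

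For surjectivity I would argue that the image $\ov{\theta}(\ov{X})$ is isometric to the complete space $\ov{X}$, hence complete, hence closed in $Y$; and it contains $\ov{\theta}(c_X(X)) = \theta(X)$, which generates (is dense in) $Y$ by hypothesis. A closed set containing a dense set is the whole space, so $\ov{\theta}$ is onto, and therefore an isometric isomorphism with $\ov{\theta} \circ c_X = \theta$. Since $\ov{X}$ with $c_X$ is free and $\ov{\theta}$ is an isomorphism commuting with the universal arrows, $Y$ with $\theta$ is free as well.

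I expect no genuine obstacle here; the only points requiring care are the identification of ``generating'' with ``dense'' and the use of completeness to conclude the image is closed, both routine. An equally direct alternative would sidestep $\ov{\theta}$ entirely: given any nonexpansive $f: X \rightarrow Z$ into a complete $Z$, define $\ov{f}(y) = \lim_n f(x_n)$ where $\theta(x_n) \rightarrow y$ — such a sequence exists by density and is Cauchy since $\theta$ is an isometric embedding, so $f(x_n)$ is Cauchy in the complete $Z$ — then verify well-definedness and nonexpansiveness, with uniqueness following because two continuous maps agreeing on the dense set $\theta(X)$ must coincide. Either route reduces to assembling standard density-and-completeness facts, with the transport-of-structure version being the more economical given what has already been proved.
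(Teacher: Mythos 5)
Your proposal is correct and follows essentially the same route as the paper: obtain $\ov{\theta}: \ov{X} \rightarrow Y$ from the universal property of the completion, show it is a surjective isometry (the paper checks surjectivity by computing $\ov{\theta}([\bx]) = \lim_n \theta(\bx_n) = y$ directly, where you instead note the image is closed and contains the dense set $\theta(X)$ --- a trivial variant), and then transport freeness across the isomorphism. Nothing further is needed.
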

\begin{proof}
We can assume without loss of generality that $\theta$ is an inclusion mapping. 
We show that  $\ov{\theta}$ is an isometry. To show it is onto, choose $y \in Y$. As $X$ generates $Y$,  $y$ is a limit of some Cauchy sequence $\bx$ in $X$ and we have
\[\ov{\theta}([\bx])  = \lim_n \theta(\bx_n) = \lim_n \bx_n = y\]
as required.

To show $\ov{\theta}$ preserves distances, we calculate:
\[d(\ov{\theta}([\mathbf{x}]),\ov{\theta}([\mathbf{y}]) = d( \lim_n \mathbf{x}_n, \lim_n \mathbf{y}_n ) =  \lim_n d(\mathbf{x}_n,  \mathbf{y}_n ) = d([\bx],[\mathbf{y}])\]
So $\ov{\theta}$ is indeed an isometry. It follows that $\theta$ is  a universal arrow as $c_X$ is, and we are done.
\end{proof}

The criterion of Lemma~\ref{gen} extends to $\Sigma$-algebras:
\begin{lem} \label{alggen} Let $X$, $Y$ be quantitative $\Sigma$-algebras and let $\theta: X \rightarrow Y$ be both an algebra homomorphism and an isometric embedding. Suppose that $Y$ is complete as a metric space, and $\theta(X)$ generates $Y$ as a metric space. Then $Y$ is the free complete quantitative $\Sigma$-algebra over $X$, with universal arrow $\theta$.
\end{lem}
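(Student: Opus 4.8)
The plan is to reduce everything to Lemma~\ref{gen} at the level of metric spaces and then upgrade the resulting nonexpansive extension to a $\Sigma$-algebra homomorphism. First I would forget the algebra structure: then $\theta: X \to Y$ is an isometric embedding into a complete metric space with $\theta(X)$ generating $Y$, so by Lemma~\ref{gen} the underlying metric space of $Y$ is the free complete metric space over $X$ with universal arrow $\theta$. This will be the engine for both the construction and the uniqueness of the homomorphic extension.

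Next, let $Z$ be an arbitrary complete quantitative $\Sigma$-algebra and $g: X \to Z$ a homomorphism. Regarding $g$ merely as a nonexpansive map into the complete metric space underlying $Z$, Lemma~\ref{gen} supplies a unique nonexpansive $\ov{g}: Y \to Z$ with $\ov{g}\,\theta = g$. Uniqueness at the level of $\Sigma$-algebras is then immediate, since any $\Sigma$-homomorphism $h: Y \to Z$ with $h\theta = g$ is in particular a nonexpansive extension of $g$ along $\theta$, hence coincides with $\ov{g}$ by the uniqueness clause of Lemma~\ref{gen}. So the only remaining task is to show that $\ov{g}$ itself preserves the operations.

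To check that $\ov{g}$ preserves an operation symbol $f: n$, I would exploit that $\theta(X)$ generates $Y$: given $y_1,\ldots,y_n \in Y$, choose for each $i$ a sequence $(x_{i,k})_k$ in $X$ with $\theta(x_{i,k}) \to y_i$. Since $f_Y$ is nonexpansive, hence continuous on $Y^n$ with the max metric, and $\theta$ is a homomorphism,
\[ f_Y(y_1,\ldots,y_n) = \lim_k f_Y(\theta(x_{1,k}),\ldots,\theta(x_{n,k})) = \lim_k \theta(f_X(x_{1,k},\ldots,x_{n,k})). \]
Applying the continuous map $\ov{g}$, then $\ov{g}\,\theta = g$, then that $g$ is a homomorphism, and finally the continuity of $f_Z$ together with $g(x_{i,k}) = \ov{g}(\theta(x_{i,k})) \to \ov{g}(y_i)$, gives
\[ \ov{g}(f_Y(y_1,\ldots,y_n)) = \lim_k f_Z(g(x_{1,k}),\ldots,g(x_{n,k})) = f_Z(\ov{g}(y_1),\ldots,\ov{g}(y_n)), \]
as required; the nullary case is handled directly, since then $\ov{g}(f_Y) = \ov{g}(\theta(f_X)) = g(f_X) = f_Z$.

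The genuine content lies entirely in this last computation, and I expect the only point needing care to be the interchange of $\ov{g}$ and $f_Z$ with the limits. This rests on three facts already in hand: the operations $f_Y$ and $f_Z$ are nonexpansive and so continuous, the extension $\ov{g}$ is nonexpansive and so continuous, and convergence in the product $Y^n$ under the max metric is coordinatewise. No further hypotheses on $\Sigma$ or on the spaces are needed, so the argument is uniform in the signature.
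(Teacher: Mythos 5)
Your proof is correct and takes essentially the same approach as the paper: both invoke Lemma~\ref{gen} to obtain the unique nonexpansive extension and then verify it is a $\Sigma$-homomorphism via the same limit-interchange computation, using density of the image of $X$, continuity of the (nonexpansive) operations and of the extension, and coordinatewise convergence in the max metric. The only differences are presentational — you treat arbitrary arity and the nullary case explicitly where the paper illustrates with a binary operation, and you spell out the uniqueness argument that the paper leaves implicit.
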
 
\begin{proof}
We can assume without loss of generality that $\theta$ is an inclusion mapping. Let  $h:X \rightarrow Z$ be any nonexpansive homomorphism from $X$ to a complete quantitative algebra $Z$. By Lemma 1, $h$ extends to a unique nonexpansive map $\ov{h}: Y \rightarrow Z$, and we only have to show that this is a homomorphism. Taking a binary operation $f$ as an example, let $x,y$ be in $Y$. As $X$ generates $Y$ as a metric space there are Cauchy sequences $\bx$ and $\bby$ in $X$ with respective limits $x$ and $y$, and we calculate:
\[\begin{array}[b]{lcl}\ov{h}(f_Y(x,y)) & = & \ov{h}(f_Y(\lim_n \bx_n,\lim_n \bby_n)) \\
                                                     & = &  \ov{h}(\lim_n f_Y( \bx_n, \bby_n)) \\
                                                     & = &  \lim_n \ov{h}( f_Y( \bx_n, \bby_n))\\
                                                     & = &  \lim_n \ov{h}( f_X( \bx_n, \bby_n))\\
                                                     & = &  \lim_n h( f_X( \bx_n, \bby_n))\\
                                                     & = &  \lim_n  f_Z(h( \bx_n), h(\bby_n))\\
                                                     & = &    f_Z(\lim_n h( \bx_n), \lim_n h(\bby_n))\\
                                                     & = &    f_Z(\lim_n \ov{h}( \bx_n), \lim_n \ov{h}(\bby_n))\\
                                                     & = &    f_Z( \ov{h}(\lim_n \bx_n),  \ov{h}(\lim_n \bby_n))\\
                                                     & = &    f_Z( \ov{h}(x),  \ov{h}(y))\\
\end{array} \qedhere \]
\end{proof}

We can now prove our main theorem on free complete Wasserstein algebras of order $p$:
\begin{thm}  \label{wholepoint} For any complete metric space X, $P_p(X)$ is the free complete Wasserstein algebra of order $p$ over $X$, with universal arrow the Dirac delta function.
\end{thm}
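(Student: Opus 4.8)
The plan is to exhibit $P_p(X)$ as a composite of two free constructions: the free Wasserstein algebra of order $p$ over the metric space $X$, followed by the free complete (quantitative) algebra over that. Concretely, I would start from Theorem~\ref{atfree}, which gives that $F_p(X)$, the finitely supported measures, is the free Wasserstein algebra of order $p$ over $X$ with universal arrow $\delta$, and then pass to the completion using Theorem~\ref{Bolgen} and Lemma~\ref{alggen}.

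First I would take $\theta\colon F_p(X) \hookrightarrow P_p(X)$ to be the inclusion. This $\theta$ is a homomorphism of Wasserstein algebras (as $F_p(X)$ is a subalgebra) and an isometric embedding (the $W_p$ distance simply restricts). By Lemma~\ref{lem:Wasalg}, $P_p(X)$ is a Wasserstein algebra of order $p$; by Theorem~\ref{Bolgen} (here using that $X$ is complete) it is complete as a metric space and is generated by $\theta(F_p(X)) = F_p(X)$. Hence Lemma~\ref{alggen} applies and yields that $P_p(X)$ is the free complete quantitative $\Sigma$-algebra over $F_p(X)$, with universal arrow $\theta$.

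Next I would compose the two universal properties. Given any complete Wasserstein algebra $W$ of order $p$ and any nonexpansive map $f\colon X \rightarrow W$, Theorem~\ref{atfree} produces a unique nonexpansive homomorphism $f_1\colon F_p(X) \rightarrow W$ with $f_1 \circ \delta = f$; then Lemma~\ref{alggen} produces a unique nonexpansive homomorphism $\ov{f}\colon P_p(X) \rightarrow W$ with $\ov{f} \circ \theta = f_1$. Since $\theta \circ \delta$ is precisely the Dirac function into $P_p(X)$, we obtain $\ov{f} \circ \delta = f$, establishing existence; uniqueness follows by chaining the two uniqueness clauses, since any competitor $g$ restricts along $\theta$ to a map satisfying $(g \circ \theta) \circ \delta = f$, hence equals $f_1$ by Theorem~\ref{atfree}, and therefore equals $\ov{f}$ by the uniqueness in Lemma~\ref{alggen}.

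The only genuinely delicate point, and the one I would be careful to spell out, is that Lemma~\ref{alggen} delivers freeness among \emph{all} complete quantitative $\Sigma$-algebras, whereas the statement asks for freeness among the subcategory of complete Wasserstein algebras. This causes no real difficulty: Wasserstein algebras form a full subcategory (the morphisms are just nonexpansive homomorphisms in either setting), and $P_p(X)$ is itself a Wasserstein algebra by Lemma~\ref{lem:Wasalg}, so restricting the class of test objects to Wasserstein algebras only shrinks the family of diagrams that must be filled, while the factoring map automatically has the correct domain and codomain. Thus the universal property transfers verbatim. Essentially all of the substantive work has already been discharged in Theorem~\ref{Bolgen} (completeness and generation by finitely supported measures) and in the abstract completion Lemma~\ref{alggen}, so the proof of the main theorem is just the assembly of these pieces.
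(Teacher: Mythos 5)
Your proposal is correct and follows essentially the same route as the paper: it factors the universal property through $F_p(X)$ via Theorem~\ref{atfree}, then uses Theorem~\ref{Bolgen} together with Lemma~\ref{alggen} to show $P_p(X)$ is free over $F_p(X)$, and composes the two universal arrows. The point you flag as delicate—that freeness among all complete quantitative $\Sigma$-algebras restricts to freeness in the full subcategory of complete Wasserstein algebras because $P_p(X)$ itself lies in that subcategory—is exactly the paper's ``in particular'' step, which you have merely spelled out in more detail.
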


\begin{proof} Theorem~\ref{Bolgen} and Lemma~\ref{lem:Wasalg} tell us that $P_p(X)$ is a complete Wasserstein algebra of order $p$.  It therefore remains to prove it is the free one over $X$, with unit the Dirac delta function.

First, by Theorem~\ref{atfree},  $F_p(X)$ is the free Wasserstein algebra of order $p$ over $X$, with universal arrow the Dirac function.  

Second, $P_p(X)$ is the free complete Wasserstein algebra of order $p$  over $F_p(X)$, with universal arrow the inclusion.  
To see this, first note that  $F_p(X)$ is  a sub-quantitative algebra of $P_p(X)$ that, by  Theorem~\ref{Bolgen}, generates $P_p(X)$,
So, by Lemma~\ref{alggen}, $P_p(X)$ is the free complete quantitative algebra with signature the $+_r$ over $F_p(X)$, with universal arrow the inclusion.
 It is therefore, in particular, the free complete Wasserstein algebra of order $p$  over $F_p(X)$, with universal arrow the inclusion.


Combining these two free algebra assertions, we see that,  as required,  $P_p(X)$ is the free complete Wasserstein algebra  of order $p$ over $X$, with universal arrow the Dirac delta function.
\end{proof}

There is a disconnect between the present work and that in~\cite{MPP} where extended metrics are used. We sketch the  straightforward extension of the above freeness results to extended metric spaces. An extended metric is a  function $d:X^2 \rightarrow \eR$ to the extended reals, defined in the usual way, but using the natural extension of addition to $\eR$, where $x + \infty = \infty +x = \infty$ and the natural extension of the order where $x \leq \infty$; see~\cite{BBI} for information on such spaces. 
The topology on an extended metric space $(X,d)$ is defined in the usual way using open balls. Nonexpansive functions are defined in the usual way, and are continuous. Cauchy sequences are defined in the usual way, as are then complete extended metric spaces. The completion of an extended metric space is defined as usual and the function $c_X: X \rightarrow \ov{X}$ so obtained is an isometry and is universal. 

The analogue of Lemma~\ref{gen} then goes through, with the same proof, as does that of Lemma~\ref{alggen}, with the evident definitions of quantitative $\Sigma$-algebras over extended metric spaces and their homomorphisms. Next, the Wasserstein algebras of order $p$ over an extended metric space are also defined as before, and
the extension of Lemma~\ref{lem:Wasalg} to extended metric spaces then goes through, with the same proofs, as do the following extensions of Theorems~\ref{atfree} and ~\ref{wholepoint}:
\begin{thm} \label{atfreeex} For any metric space X, $F_p(X)$ is the free extended metric Wasserstein algebra of order $p$ over $X$, with universal arrow the Dirac delta function $\delta:X \rightarrow F_p(X)$.
\end{thm}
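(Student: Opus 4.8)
The plan is to reduce the statement to its metric counterpart, Theorem~\ref{atfree}, by exploiting that $X$ is an \emph{ordinary} metric space. The key observation is that if $B$ is an extended-metric Wasserstein algebra of order $p$ and $f\colon X \to B$ is nonexpansive and affine, then $f(X)$ lies entirely within a single \emph{galaxy} of $B$, that is, within the set $G = \{b' \in B \mid d(b,b') < \infty\}$ of points at finite distance from a fixed $b = f(x_0)$ (the case $X = \emptyset$ being trivial, as then $F_p(X) = \emptyset$): indeed, for all $x \in X$ we have $d(f(x_0),f(x)) \le d(x_0,x) < \infty$ since $X$ is a metric space. Moreover $G$ is closed under the operations $+_r$: using $b = b +_r b$ (law~(\ref{B2})) together with the argumentwise Lipschitz bounds of Lemma~\tref{was}{was1} and the triangle inequality, $d(b, b_1 +_r b_2) = d(b +_r b, b_1 +_r b_2) \le r^{1/p}d(b,b_1) + (1-r)^{1/p}d(b,b_2) < \infty$ whenever $b_1,b_2 \in G$. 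Hence, equipped with the restricted metric, $G$ is an ordinary-metric Wasserstein algebra of order $p$.

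With $G$ identified, $f$ corestricts to a nonexpansive affine map $f'\colon X \to G$ into an ordinary-metric Wasserstein algebra, and Theorem~\ref{atfree} supplies a unique nonexpansive affine $\ov{f'}\colon F_p(X) \to G$ with $\ov{f'}\delta = f'$; composing with the inclusion $G \hookrightarrow B$ yields the desired extension $\ov{f}\colon F_p(X) \to B$. (Note that $F_p(X)$ is itself an ordinary metric space, since any two finitely supported measures on a metric space admit a finite-cost coupling, e.g.\ the product coupling.) For uniqueness in $B$, observe that any nonexpansive affine $g\colon F_p(X) \to B$ extending $f$ must, by the same galaxy reasoning applied to $F_p(X)$, send all of $F_p(X)$ into a single galaxy; as this galaxy contains $f(X) \subseteq G$, it is $G$, whence $g = \ov{f'}$ by the uniqueness clause of Theorem~\ref{atfree}. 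This establishes the universal property, with $\delta$ the universal arrow (still an isometric embedding, since the unique coupling of $\delta(x)$ and $\delta(y)$ is $\delta(\tuple{x,y})$, giving $\W_p(\delta(x),\delta(y)) = d(x,y)$).

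I expect the only real work to be the galaxy bookkeeping of the first paragraph — closure of $G$ under the $+_r$ and the verification that $G$ inherits the axioms and is an ordinary metric space — all of which is routine. An equally valid alternative is to transcribe the proof of Theorem~\ref{atfree} verbatim: since $X$ is a metric space, every distance $d(s,t)$ between support points is finite, and nonexpansiveness gives $d(f(s),f(t)) \le d(s,t) < \infty$, so the coupling estimate built on Lemma~\tref{was}{was3} never encounters $\infty$ and reads exactly as in the metric case. This route requires first extending Lemma~\ref{was} to extended-metric algebras, which goes through by the same induction since its base case is equation~(\ref{Was-alg}), postulated to hold under the usual conventions $a + \infty = \infty$ and $a \le \infty$. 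The genuinely infinite phenomena — the case $\W_p(\mu,\nu) = \infty$ and explicit $\infty$-arithmetic in the coupling bound — would only be forced if $X$ itself carried an extended metric, which is excluded by the hypothesis.
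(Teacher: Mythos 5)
Your proposal is correct, and your \emph{primary} argument is genuinely different from the paper's. The paper obtains Theorem~\ref{atfreeex} by what you describe as your alternative route: it notes that the definitions and proofs of Lemma~\ref{was}, Lemma~\ref{lem:Wasalg} and Theorem~\ref{atfree} go through verbatim over extended metric spaces, the essential point being exactly the one you make at the end --- since $X$ is an ordinary metric space and $f$ is nonexpansive, every quantity in the coupling estimate built on Lemma~\tref{was}{was3} is finite, so the computation reads unchanged. Your galaxy reduction is instead a modular argument: it invokes Theorem~\ref{atfree} as a black box, after checking that the galaxy $G$ of $b = f(x_0)$ is closed under the $+_r$ (via~(\ref{Was-alg}), or via Lemma~\tref{was}{was1} plus the triangle inequality), that $G$ is an ordinary-metric Wasserstein algebra (you should state explicitly that \emph{pairwise} distances in $G$ are finite, by the triangle inequality through $b$ --- routine, but it is what makes $G$ an ordinary metric space), and that both existence and uniqueness localise to $G$ because the nonexpansive images of the ordinary metric spaces $X$ and $F_p(X)$ cannot leave a galaxy. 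What your route buys is conceptual clarity and reuse: nothing is re-proved, and it isolates precisely why extended metrics add nothing when the base space carries a genuine metric; indeed, since galaxies are clopen and hence complete in a complete extended metric space, the same trick would also dispatch the extended version of Theorem~\ref{wholepoint} without the extended analogues of Lemmas~\ref{gen} and~\ref{alggen}. What the paper's route buys is uniformity: the ``everything extends with the same proofs'' infrastructure is set up once and then serves simultaneously for the completion lemmas and both extended theorems. One cosmetic point: calling $f\colon X \to B$ ``affine'' is vacuous (a slip inherited from the paper's own wording of Theorem~\ref{atfree}), as $X$ carries no algebra structure; freeness is with respect to nonexpansive maps.
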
 

\begin{thm}  For any complete metric space X, $P_p(X)$ is the free complete extended metric Wasserstein algebra of order $p$ over $X$, with universal arrow the Dirac delta function.
\end{thm}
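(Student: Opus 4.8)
The plan is to run the proof of Theorem~\ref{wholepoint} essentially verbatim, substituting for each ingredient its extended-metric counterpart as announced in the surrounding discussion. The one conceptual point to keep in mind throughout is that, since $X$ is an ordinary (finitely-valued) complete metric space, Theorem~\ref{Bolgen} shows $P_p(X)$ to be an ordinary complete metric space, all of whose distances are finite; we nonetheless regard it as an object of the larger category of complete extended metric Wasserstein algebras, so that the asserted universal property is tested against algebras that may carry infinite distances.

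First I would record the two freeness facts to be composed. By Theorem~\ref{atfreeex}, $F_p(X)$ is the free extended metric Wasserstein algebra of order $p$ over $X$, with universal arrow the Dirac delta $\delta$. Next, by the extended-metric version of Lemma~\ref{lem:Wasalg}, $P_p(X)$ is an extended metric Wasserstein algebra of order $p$; and by Theorem~\ref{Bolgen} it is complete and is generated, as a metric space, by its sub-algebra $F_p(X)$ of finitely supported measures, the inclusion being simultaneously an algebra homomorphism and an isometric embedding. The extended-metric analogue of Lemma~\ref{alggen} then applies and shows that $P_p(X)$ is the free complete extended metric quantitative algebra with signature the $+_r$ over $F_p(X)$, and hence, in particular, the free complete extended metric Wasserstein algebra of order $p$ over $F_p(X)$, with universal arrow the inclusion.

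Finally I would combine the two universal properties exactly as in Theorem~\ref{wholepoint}: the forgetful functor from complete extended metric Wasserstein algebras to extended metric spaces factors through the extended metric Wasserstein algebras, so the two left adjoints compose and $P_p(X)$ is the free complete extended metric Wasserstein algebra of order $p$ over $X$. The universal arrow is the composite of $\delta: X \rightarrow F_p(X)$ with the inclusion $F_p(X) \hookrightarrow P_p(X)$, which is just the Dirac delta $X \rightarrow P_p(X)$. I do not expect a genuine mathematical obstacle here: all of the analytic content already resides in Theorem~\ref{Bolgen} and Theorem~\ref{atfreeex}, and the only thing requiring care is that the extended-metric versions of Lemmas~\ref{gen} and~\ref{alggen} really do go through with their original proofs. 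The sole nontrivial observation in that verification is that the completion and Cauchy-sequence arguments are unaffected by the presence of infinite distances in the test algebras, since the Cauchy sequences actually manipulated all lie inside the finitely-valued spaces $F_p(X)$ and $P_p(X)$; this is exactly what makes the transfer routine rather than delicate.
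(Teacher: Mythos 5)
Your proposal is correct and follows exactly the route the paper intends: the paper itself merely asserts that the extended-metric analogues of Lemmas~\ref{gen}, \ref{alggen} and \ref{lem:Wasalg} and of Theorems~\ref{atfree} and \ref{wholepoint} ``go through, with the same proofs,'' and your argument is precisely that verbatim transfer of the proof of Theorem~\ref{wholepoint}, composing the freeness of $F_p(X)$ (Theorem~\ref{atfreeex}) with the freeness of $P_p(X)$ over $F_p(X)$ obtained from Theorem~\ref{Bolgen} and the extended Lemma~\ref{alggen}. Your closing observation---that the completion arguments are untouched by infinite distances because all Cauchy sequences manipulated lie in the finitely-valued spaces $F_p(X)$ and $P_p(X)$, while infinities can only occur in the test algebras---is exactly the point that makes the paper's ``same proofs'' claim legitimate.
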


We remark that in~\cite{MPP} the Wasserstein algebras of order $p$ are axiomatised using quantitative equational logic. This logic makes use of quantitative equations, which have the form $t =_r u$, where $t$ and $u$ are terms built as usual from a finitary signature, and $r\in [0,1]$. If $t$ and $u$ denote elements $a$ and $b$ of an extended metric space $X$, then  $t =_r u$ holds if, and only if, $d_X(a,b) \leq r$.
The  following axiom scheme  of quantitative equational logic  axiomatises the Wasserstein algebras of order $p$:
%
%
\[x=_{q_1} y, x'=_{q_2}y'\vdash x+_r x'=_{e} y+_r y'\]
where $q_1,q_2,q,e$ range over rationals  in $[0,1]$  such that $rq_1^p+(1-r)q_2^{p} \leq e^p$.

Our discussion has been in terms of barycentric algebras, but, as common in algebra, other axiomatisations are possible. As an immediate example, noting that the categories of barycentric algebras and convex spaces are equivalent, let us define Wasserstein convex spaces of order $p$ to be quantitative convex spaces which obey the natural Wasserstein condition of order $p$ for convex spaces, viz that, for all $n \geq 1$ we have:
\[\begin{array}{lclr}d(\sum^n_{i = 1} r_ix_i, \sum^n_{i = 1} r_ix'_i)^p 
                                                             & \leq & \sum^n_{i = 1}  r_i d(x_i,x'_i)^p 
\end{array}\]
Then  the categories of Wasserstein barycentric algebras of order of $p$ and  Wasserstein convex spaces of order of $p$ are equivalent, as are their subcategories of algebras over complete metric spaces. One then immediately obtains from the above work characterising  free Wasserstein barycentric algebras of order of $p$  corresponding results characterising free Wasserstein convex spaces of order of $p$. 

The situation is more complex if we consider instead midpoint (or mean-value or medial mean) algebras, see, e.g.,~\cite{Ker,Hec}. 
These algebras are the commutative, idempotent, medial groupoids. That is, they have a binary operation, which we write as $\oplus$, subject to the following laws:
\begin{align} 
x \oplus y& = y \oplus x \tag{C}\\  
x \oplus x& = x \tag{I}\\
(x \oplus u) \oplus (v \oplus z) & = (x \oplus v) \oplus (u \oplus z) \tag{M}
\end{align} 
One can define binary convex combinations $x +_r y$ in these algebras where $r \leq 1$ is a dyadic fraction (i.e., one  of 
the form $\frac{m}{2^n}$), and they obey analogues of the barycentric axioms. The free such algebra over a set consists of all finite probability distributions with dyadic weights, see~\cite[Theorems 2.14 and 2.15]{JK}.

Every barycentric algebra evidently yields a midpoint algebra by setting $x \oplus y = x +_{0.5} y$, and we then have a full and faithful functor to the category of midpoint algebras,  which acts as the identity on morphisms; however it is not an equivalence of categories, as not every midpoint algebra can be extended to a barycentric algebra with the same carrier
\footnote{The free midpoint algebra over two generators ($a$ and $b$, say) cannot be so extended. It consists of all formal dyadic fractional  convex combinations of $a$ and $b$. If, for the sake of contradiction, there was such an extension then we would have $a +_r b = a +_s b$ for some $r$ and $s$ with $r$ not a dyadic fraction and $s$ a dyadic fraction. But,  according to a lemma of Neumann (see~\cite[Lemma 4]{Neu}), if the equation $a +_r b = a +_s b$ holds for two different $r, s \in (0,1)$ in a barycentric algebra, then it holds for all different 
$r, s \in (0,1)$ in that algebra. So, in our case, taking $r, s \in (0,1)$ to be two different dyadic fractions we obtain the desired contradiction.}.

The situation changes when we move to quantitative algebras. Define 
Wasserstein midpoint algebras of order $p$ to be quantitative midpoint algebras which satisfy the condition
\begin{align}d(x \oplus y, x' \oplus y')^p   &\leq  d(x,x')^p  \oplus d(y,y')^p   \tag*{} \end{align}
Then,  following the argument of Theorem~\ref{atfree}, we see that, for any metric space X, $F_p(X)$, with its natural midpoint algebra structure, is the free Wasserstein midpoint algebra of order $p$ over $X$. Next, note that, for complete $X$,  $P_p(X)$ is even generated by the finite probability distributions with dyadic fractional weights, as, by Lemma, $x +_r y$ is H\"{o}lder continuous in $r$, and so, if a sequence $r_n$ in $[0,1]$ converges to $r$, then the sequence $x +_{r_n} y$ is Cauchy and converges to $x +_r y$. With that, now following the argument of Theorem~\ref{wholepoint}, we see that $P_p(X)$ forms the free complete Wasserstein midpoint algebra of order $p$ over $X$. 

The construction of midpoint algebras from barycentric algebras yields a full and faithful functor from the category of Wasserstein barycentric algebras of order $p$ over complete metric spaces to Wasserstein midpoint algebras of order $p$ over complete metric spaces. We now sketch a proof  that this is in fact an equivalence of categories.
It suffices to show that every complete Wasserstein midpoint algebra
of order $p$ can be extended to a complete Wasserstein barycentric
algebra of order $p$ with the same carrier. Let $A$ be such a midpoint
algebra.
There is a unique nonexpansive morphism $h: P_p(A) \rightarrow A$ of Wasserstein midpoint algebras of order $p$ extending the identity on $A$ along the unit. Using this, one inherits the H\"{o}lder continuity of $+_r$ on $A$, for dyadic $r$ from that on $P_p(A)$. Then one can define $x +_r y \in A$ for $x,y \in A$ as the limit of the sequence $x +_{r_n}  y$ where $r_n$ is a chosen sequence of dyadic fractions converging to $r$, noting that the sequence $x +_{r_n}  y$ is Cauchy by H\"{o}lder continuity.




\begin{thebibliography}{1}

\bibitem{Bas} G.~Basso, 
A Hitchhiker's guide to Wasserstein distances, available at: \url{
http://n.ethz.ch/~gbasso/}, 2015.

\bibitem{VanB} F.~van Breugel, The metric monad for probabilistic
nondeterminism, 
Unpublished note, available at
\url{http://www.cse.yorku.ca/~franck/research/drafts/monad.pdf},
2005.

\bibitem{BBI}  D.~Burago, Y.D.~Burago, and S.~Ivanov, \emph{A course
in metric geometry}, Vol.\ 33, American Mathematical Society, 2001.

  \bibitem{CD} P.~Clement  and  W.~Desch,
  An elementary proof of the triangle inequality for the Wasserstein
metric, \emph{Proceedings of the American Mathematical Society}, {\bf
136}(1), 333--339, 2008.

\bibitem{Edw} D.E.~Edwards,
A simple proof in Monge-Kantorovich duality theory, \emph{Studia
Math.}, {\bf 200}(1), 67--77, 2010.

\bibitem{FP} T.~Fritz and P.~Perrone, 
A Probability Monad as the Colimit of Finite Powers, arXiv:
1712.05363, 2017.

\bibitem{Fre} D.H.~Fremlin, \emph{ Measure Theory, Volume 4:
Topological Measure Spaces}, Torres Fremlin, 2006.

\bibitem{Gir} M.~Giry,  A categorical approach to probability theory,
\emph{Categorical aspects of topology and analysis},  Lecture Notes
in Mathematics, Vol. 915, 68--85, Springer, 1982.

\bibitem{Hec} R.~Heckmann, Probabilistic domains, \emph{Proc.\  19th.
\ International Colloquium on Trees in Algebra and
Programming}, Lecture Notes in Computer Science, Vol.\ 787, 142--156,
Springer, 1994.

\bibitem{HKSY} C.~Heunen, O.~Kammar, S.~Staton, and H.~Yang,
A convenient category for higher-order probability theory,
\emph{Proc.\  32nd.\ Annual Symposium on Logic in Computer Science},
1--12, IEEE, 2017.

\bibitem{JK} J.~Je\v{z}ek and T.~Kepka,  Free commutative idempotent
abelian groupoids and quasigroups, \emph{Acta Universitatis
Carolinae, Mathematica et Physica},  {\bf 17}(2), 13--19, 1976.

\bibitem{Jon} C.~Jones, \emph{Probabilistic non-determinism}, Ph.D.\
Thesis, Department of Computer Science, The University of Edinburgh,
197 pages,
available at \url{ https://www.era.lib.ed.ac.uk/handle/1842/413},
1990.

\bibitem{JP}  C.~Jones and G.D.~Plotkin,   A probabilistic
powerdomain of evaluations, \emph{Proc.\ 4th.\ Annual Symposium on 
Logic in Computer Science}, pp. 186--195, IEEE, 1989.

\bibitem{KP} K.~Keimel and G.D.~Plotkin, 
Mixed powerdomains for probability and nondeterminism, \emph{Logical
Methods in Computer Science}, {\bf 13}(1:2), 1--84, 2017.

\bibitem{Ker} S.~Kermit, Cancellative medial means are arithmetic,
\emph{Duke Math J.}, {\bf 37}, 439--445,1970.

\bibitem{Law} F.W.~Lawvere, The category of probabilistic mappings,
Preprint, available at \url{https://ncatlab.org/nlab/files/
lawvereprobability1962.pdf}, 1962.

\bibitem{MPP} R.~Mardare, P.~Panangaden, and G.D.~Plotkin,
Quantitative Algebraic Reasoning, \emph{Proc.\ 31st Annual Symposium
on Logic in Computer Science}, 700--709, 2016.

\bibitem{Neu} W.D.~Neumann, 
\newblock On the quasivariety of convex subsets of affine spaces, 
\newblock {\em Archiv der Mathematik}, 21:11--16, 1970.

\bibitem{Par} K.R.~Parthasarathy, \emph{Probability measures on
metric spaces}, Academic Press, 1967.

\bibitem{RS} A.B.~Romanowska and J.D.H.~Smith, \emph{Modes}, World
Scientific, Singapore, 2002.

\bibitem{Sto42}  M.H.~Stone, Postulates for the barycentric calculus,
\emph{Annali di Matematica Pura ed Applicada}, Serie IV,
{\bf 29}, 25--30, 1942.

\bibitem{Swi74} T.~\v{S}wirszcz, Monadic functors and convexity, \
emph{Bulletin de l'Acad\'{e}mie Polonaise des Sciences},
Ser.\ Sci.\ Math.\ Astronom.\ Phys, {\bf 22}, 39--42, 1974.

\bibitem{Vil03} C.~Villani, \emph{Topics in optimal transportation},
Vol.\ 58 of Graduate Studies in Mathematics, American Mathematical
Society, 2003.

\bibitem{Vil08} C.~Villani,
\emph{Optimal transport: old and new}, Vol.\ 338, Springer Science \&
Business Media, 2008.

\end{thebibliography}
\end{document}